\newcommand{\specificthanks}[1]{\@fnsymbol{#1}}
\newcommand{\iter}[1]{{#1}^{(\ell)}}
\newcommand{\iiter}[1]{{#1}^{(\ell+1)}}
\newcommand{\ter}[1]{{#1}^{(\ell-1)}}
\numberwithin{equation}{section}
\newcommand{\thefont}[2]{\fontsize{#1}{#2}\fontshape{n}\selectfont}
\newcommand{\1}{\rlap{\thefont{10pt}{12pt}1}\kern.16em\rlap{\thefont{11pt}{13.2pt}1}\kern.4em}
\theoremstyle{plain}
\newtheorem{prop}{Proposition}[section]
\newtheorem{lem}{Lemma}[section]
\theoremstyle{definition}
\newtheorem{defin}{Definition}[section]
\theoremstyle{remark}
\newtheorem{rem}{Remark}[section]
\newcommand{\R}{{\mathbb R}}
\newcommand{\WS}{W_2(\Omega)}
\newcommand{\WSd}{W_2(\R^{d})}
\newcommand{\WSac}{W_2^{ac}(\Omega)}
\newcommand{\LL}{L^{2}_{\mu_{r}}(\Omega)}
\newcommand{\VV}{V_{\mu_{r}}(\Omega)}
\newcommand{\CL}{\mathrm{CL}}
\newcommand{\LLemp}{L^{2}_{\bar{\bnu}}(\Omega)}
\newcommand{\U}{{\mathcal U}}
\newcommand{\spann}{\operatorname{Sp}}
\newcommand{\costn}{\operatorname{D}_W}
\newcommand{\bfun}{\boldsymbol{f}}
\newcommand{\spaceK}{E}
\newcommand{\id}{\text{id}}
\newcommand{\bnu}{\boldsymbol{\nu}}
\DeclareMathOperator{\Proj}{Proj}
\DeclareMathOperator{\Prox}{Prox}
\DeclareMathOperator{\argmin}{argmin}
\newcommand{\uargmin}[1]{\underset{#1}{\argmin}\;}
\DeclareMathOperator{\argmax}{argmax}
\newcommand{\umin}[1]{\underset{#1}{\min}\;}
\newcommand{\dotprod}[2]{\ensuremath{\langle #1 , #2\,\rangle}}
\title{Log-PCA versus Geodesic PCA of histograms in the\\ Wasserstein space.}
\author{Elsa Cazelles\textsuperscript{\specificthanks{5}}\thanks{Institut de Math\'ematiques de Bordeaux et CNRS, IMB-UMR5251, Universit\'e de Bordeaux} \and Vivien Seguy\textsuperscript{\specificthanks{5}}\thanks{Graduate School of Informatics, Kyoto University.} \and J\'er\'emie Bigot\footnotemark[1] \and  Marco Cuturi\thanks{CREST, ENSAE, Universit\'e  Paris-Saclay.}   \and Nicolas Papadakis\footnotemark[1]}
\begin{document}

\maketitle

\thispagestyle{empty}

\begin{abstract}
This paper is concerned by the statistical analysis of data sets whose elements are random histograms. For the purpose of learning  principal modes of variation from such data, we consider the issue of computing the PCA of histograms with respect to the 2-Wasserstein distance between probability measures. To this end, we propose to compare the methods of log-PCA and geodesic PCA in the Wasserstein space as introduced in \cite{BGKL13,NIPS2015_5680}.  Geodesic PCA involves solving a non-convex optimization problem. To solve it approximately, we propose a novel forward-backward algorithm. This allows a detailed comparison between log-PCA and geodesic PCA of one-dimensional histograms, which we carry out using various datasets, and stress the benefits and drawbacks of each method. We extend these results for two-dimensional data and compare both methods in that setting.

\end{abstract}

\noindent \emph{Keywords:}  Geodesic Principal Componant Analysis, Wasserstein Space, Non-convex optimization
 \\
\noindent\emph{AMS classifications:} 62-07, 68R10,  62H25 

\let\oldthefootnote=\thefootnote
\renewcommand{\thefootnote}{{\specificthanks{5}}}
\footnotetext[5]{These authors contributed equally.}
\let\thefootnote=\oldthefootnote
\section{Introduction}
Most datasets describe multivariate data, namely vectors of relevant features that can be modeled as random elements sampled from an unknown distribution. In that setting, Principal Component Analysis (PCA) is certainly the simplest and most widely used approach to reduce the dimension of such datasets. We consider in this work the statistical analysis of data sets whose elements are histograms supported on the real line. Just as with PCA, our main goal in that setting is to compute the principal modes of variation of histograms around their mean element and therefore facilitate the visualization of such datasets. However, since the number, size or locations of significant bins in the histograms of interest may vary from one histogram to another, using standard PCA on histograms (with respect to the Euclidean metric) is bound to fail (see for instance Figure~\ref{eucli_wass_gaussian}).

In this paper, we propose to use the 2-Wasserstein metric~\cite[\S7.1]{villani-topics} to measure the distance between histograms, and to compute their modes of variation accordingly. In our approach, histograms are seen as piecewise constant probability density functions (pdf) supported in a given interval $\Omega$ of the real line. In this setting, the variability in a set of histograms can be analyzed via the notion of Geodesic PCA (GPCA) of probability measures in the Wasserstein space $\WS$ admitting these histograms as pdf. That approach has been recently proposed in the statistics and machine learning literature in~\cite{BGKL13} for probability measures supported on the real line, and in~\cite{NIPS2015_5680,Wang2013} for discrete probability measures on $\mathbb{R}^d$. However, implementing GPCA remains a challenging computational task even in the simplest case of pdf's supported on $\R$. The purpose of this paper is to provide a fast algorithm to perform GPCA of probability measures supported on the real line, and to compare its performances with log-PCA, namely standard PCA in the tangent space at the Wasserstein barycenter of the data \cite{geodesicPCA,petersen2016}.\\ 

\begin{figure}
\centering
\includegraphics[width=0.8\textwidth,height=0.55\textwidth]{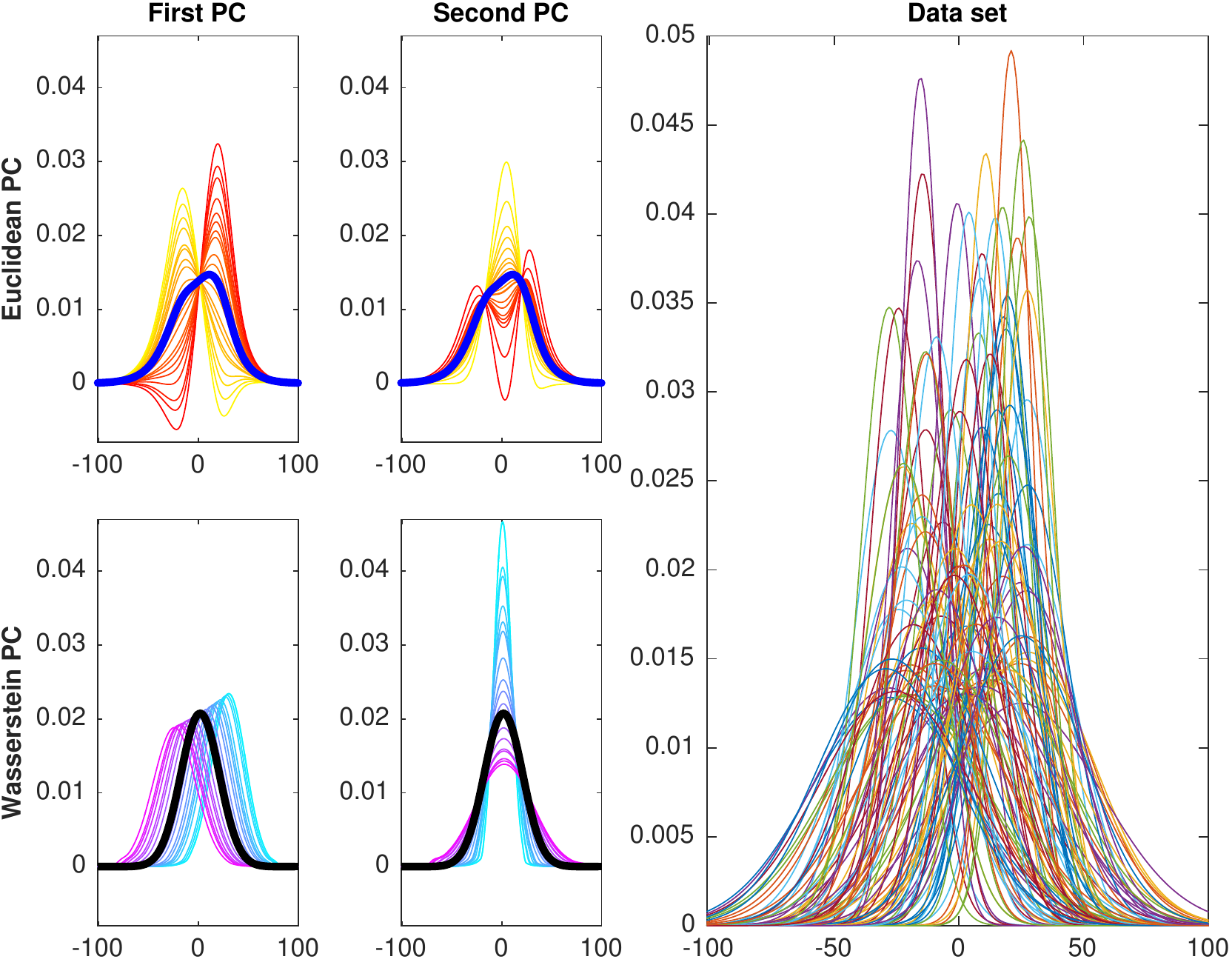}
\caption{Synthetic example. (Right) A data set of $n=100$ Gaussian histograms randomly translated and scaled. (Top-left) Standard PCA of this data set with respect to the Euclidean metric. The Euclidean barycenter of the data set is depicted in blue. (bottom-left) Geodesic PCA with respect to the Wasserstein metric using the iterative geodesic algorithm \eqref{eq:optimCPCA2}. The black curve represents the density of the Wasserstein barycenter. Colors encode the progression of the pdf of principal geodesic components in $\WS$. }
\label{eucli_wass_gaussian}
\end{figure}

\subsection{Related results}\label{sec:related}
\paragraph{Foundations of Geodesic PCA in the Wasserstein space.} The space of probability measures (with finite second moment) endowed with the 2-Wasserstein distance is not a Hilbert space. Therefore, standard PCA, which involves computing a covariance matrix, cannot be applied directly to compute principal mode of variations in a Wasserstein sense. Nevertheless, a meaningful notion of PCA can still be defined by relying on the pseudo-Riemannian structure of the Wasserstein space, which was extensively studied in~\cite{ambrosio2004gradient} and~\cite{Ambrosio2006}. Following this principle, a framework for GPCA of probability measures supported on a interval $\Omega \subset \R$ was introduced in~\cite{BGKL13}. GPCA is defined as the problem of estimating a principal geodesic subspace (of a given dimension) which maximizes the variance of the projection of the data to that subspace. In that approach the base point of that subspace is the Wasserstein barycenter of the data as introduced in~\cite{MR2801182}, which is also known as a Fr\'echet mean. Existence, consistency and a detailed characterization of GPCA in $\WS$ were studied in~\cite{BGKL13}. In particular, the authors have shown that this approach is equivalent to map the data in the tangent space of  $\WS$ at the Fr\'echet mean, and then to perform a PCA in this Hilbert space that is constrained to lie in a convex and closed subset of functions. Mapping the data to this tangent space is not difficult in the one-dimensional case as it amounts to computing a set of optimal maps between the data and their Wasserstein barycenter, for which a closed form is available using their quantile functions (see for example~\cite[\S2.2]{villani-topics}). To perform PCA on the mapped data,~\cite{BGKL13} fell short of proposing an algorithm to minimize that problem, which has a non-convex and non-differentiable objective function as well as involved constraints. Only a numerical approximation to the computation of GPCA was proposed in~\cite{BGKL13}, which amounts to applying log-PCA, namely a standard PCA of the dataset mapped beforehand to the tangent space of $\WS$ at its Fr\'echet mean. 

\paragraph{Previous work in the one-dimensional case.} PCA of histograms with respect to the Wasserstein metric has also been proposed in~\cite{Verde2015} in the context of symbolic data analysis. Their approach consists in computing a standard PCA in the Hilbert space $L^{2}([0,1])$ of the quantile functions associated to the histograms. Therefore, the algorithm in~\cite{Verde2015} corresponds to log-PCA of probability measures as suggested in~\cite{BGKL13}, but it does not solve the problem of convex-constrained PCA in a Hilbert space associated to an exact GPCA in $\WS$.

\paragraph{PGA and log-PCA in Riemannian manifolds} The method of GPCA proposed in~\cite{BGKL13} clearly shares similarities with analogs of PCA for data belonging to a Riemannian manifold $\mathcal{M}$ of finite dimension. These methods, generally referred to as Principal Geodesic Analysis (PGA) extend the notion of classical PCA in Euclidean spaces for the purpose of analyzing data belonging to curved Riemannian manifolds (see e.g.~\cite{geodesicPCA,exactPGA}). This generalization of PCA proceeds  by replacing Euclidean concepts of vector means, lines and orthogonality by the more general notions in Riemannian manifolds of Fr\'echet mean, geodesics, and orthogonality in tangent spaces.

In~\cite{geodesicPCA}, linearized PGA, which we refer to as log-PCA, is defined as follows. In a first step, data are mapped to  the tangent space $T_{\bar{x}}\mathcal{M}$ at their Fr\'echet mean $\bar{x}$ by applying the logarithmic map $\log_{\bar{x}}$ to each data point. Then, in a second step, standard PCA in the Euclidean space $T_{\bar{x}}\mathcal{M}$ can be applied. This provides a family of orthonormal tangent vectors.  Principal components of variation in $\mathcal{M}$ can then defined by back-projection of these tangent vectors on $\mathcal{M}$  by using the exponential map at $\bar{x}$, that is known to parameterize geodesics at least locally. Log-PCA (e.g.\ linearized PGA) has low computational cost, but this comes at the expense of two simplifications and drawbacks:

\begin{description}
\item[(1)] First, log-PCA amounts to substituting geodesic distances between data points by the linearized distance in $T_{\bar{x}}\mathcal{M}$, which may not always be a good approximation because of the curvature of $\mathcal{M}$, see e.g.~\cite{exactPGA}.
\item[(2)]  Secondly, the exponential map at the Fr\'echet mean parameterizes geodesics only locally, which implies that principal components in $\mathcal{M}$ obtained with log-PCA may not be geodesic along the typical range of the dataset.
\end{description}

\paragraph{Numerical approaches to GPCA and log-PCA in the Wasserstein space.} Computational methods have been introduced in~\cite{NIPS2015_5680,Wang2013} to extend the concepts of PGA in a Riemannian manifold to that of the space $\WSd$ of  probability measures supported on $\R^{d}$ endowed with the Wasserstein metric.~\cite{Wang2013} propose to compute a notion of template measure (using $k$-means clustering) of a set of (possibly discrete) probability measures, and to consider then the optimal transport plans from each measure in the data set to  that template measure. Computation of the barycentric projection of each optimal transport plan leads to a set of Monge maps over which a standard PCA can be applied, resulting in an orthonormal family of tangent vectors defined on the support of the template measure. Principal components of variation in $\R^{d}$ can then be obtained through the push-forward operator, namely by moving the mass along these tangent vectors. This approach, analog to log-PCA on Riemannian manifolds, suffers from the main drawbacks  mentioned above: For $d>1$, the linearized Wasserstein distance may be a crude approximation of the Wasserstein distance, and there is no guarantee that the computed tangent vectors parameterize geodesics of sufficient length to summarize most of the variability in the dataset. Losing geodesicity means that the principal components are curves in $\WSd$ along which the mass may not be transported  optimally. Therefore, this may significantly reduce the interpretability of these principal components. A different approach was proposed in~\cite{NIPS2015_5680}, in which the notion of generalized geodesics in $\WSd$ (see e.g.\ Chapter 9 in~\cite{Ambrosio2006}) is used to define a notion of PGA of discrete probability measures. In~\cite{NIPS2015_5680}, generalized geodesics are parameterized using two velocity fields defined on the support of the Wasserstein barycenter. The authors proposed to minimize directly the distances from the measures in the dataset to these generalized geodesics, by updating these velocity fields which are constrained to be in opposite directions.
 This  approach is more involved computationally than log-PCA, but it avoids some of the drawbacks highlighted above. Indeed, the resulting principal components yield curves in $\WSd$ that are guaranteed to be approximately geodesics. Nevertheless, the computational method in~\cite{NIPS2015_5680} requires several numerical approximations (both on the optimal transport metric and on geodesics) for the algorithm to work at large scales. Therefore, it does not solve exactly the problem of computing geodesic PCA in $\WSd$.

\subsection{Main contributions}

In this paper, we focus on computing an exact GPCA on probability measures supported on $\Omega \subset \R$. The case $d=1$ has the advantage that the linearized Wasserstein distance in the tangent space is equal to the Wasserstein distance in the space $\WS$. The main challenge is thus to obtain principal curves which are geodesics along the range of the dataset.
 
The first contribution of this paper is to propose two fast algorithms for GPCA in $\WS$. The first algorithm finds iteratively geodesics such that the Wasserstein distance between the dataset and the parameterized geodesic is minimized with respect to $\WS$. This approach is thus somewhat similar to the one in~\cite{NIPS2015_5680}. However, a heuristic barycentric projection is used in~\cite{NIPS2015_5680} to remain in the feasible set of constraints during the optimization process. In our approach, we rely on proximal operators of both the objective function and the constraints to obtain an algorithm which is guaranteed to converge to a critical point of the objective function. Moreover, we show that the global minimum of our objective function for the first principal geodesic curve corresponds indeed to the solution of the exact GPCA problem defined in~\cite{BGKL13}. While this algorithm is able to find iteratively orthogonal principal geodesics, there is not guarantee that several principal geodesics parameterize a surface which is also geodesic. This is why we propose a second algorithm which computes all the principal geodesics at once by parameterizing a geodesic surface as a convex combination of optimal velocity fields, by relaxing the orthogonality constraint between principal geodesics. Both algorithms are variant of the proximal Forward-Backward algorithm. They converge to a stationary point of the objective function, as shown by recent results in non-convex optimization based on proximal methods~\cite{Attouch,Ochs}.
 
Our second contribution is to numerically compare log-PCA in $\WS$ as done in~\cite{BGKL13} (for $d=1$) or~\cite{Wang2013}, with our approach to solve the exact Wasserstein GPCA problem. 

Finally, we discuss some extensions of these results to the comparison of log-PCA and geodesic PCA of two-dimensional histograms.

\subsection{Structure of the paper}

In Section~\ref{sec:back}, we provide some background  on GPCA in the Wasserstein space $\WS$, borrowing  material from previous work in~\cite{BGKL13}. Section~\ref{sec:linear_pca} describes log-PCA in $\WS$, and some of its limitations are discussed. Section~\ref{sec:reformulation} contains the main results of the paper, namely two algorithms for computing GPCA. In Section~\ref{sec:num}, we provide a comparison between GPCA and log-PCA using statistical analysis of real datasets of histograms.  Section~\ref{sec:end} contains a discussion on a comparison of GPCA and log-PCA of histograms supported on $\R^{2}$. Some perspectives on this work are also given. Finally, various details on the implementation of the algorithms are deferred to technical Appendices.


\section{Background on Geodesic PCA in the Wasserstein space} \label{sec:back}

\subsection{Definitions and notations}

Let $\Omega$ be a (possibly unbounded) interval in $\R$. Let $\nu$ be a probability measure (also called distribution) over $(\Omega,{\cal B}(\Omega))$ where ${\cal B}(\Omega)$ is the $\sigma$-algebra of Borel subsets of $\Omega$.  For a mapping $T : \Omega \to \Omega$,  the  push-forward measure $ T  \# \nu $ is a probability measure on $\Omega$ defined by $( T  \# \nu  )(A) =  \nu\{x\in\Omega|T(x)\in A\}$,  for any $A \in {\cal B}(\Omega)$. The cumulative distribution function (cdf) and the (generalized) quantile function of $\nu$ are denoted respectively by $F_\nu$ and $F_\nu^{-}$. The Wasserstein space $\WS$ is the set of probability measures with support included in $\Omega$ and having a finite second moment, that is endowed with the quadratic Wasserstein distance $d_W$ defined by
\begin{equation} \label{eq:wdist}
d_W^2(\mu,\nu) := \int_{0}^1 (F_{\mu}^{-}(\alpha) - F_{\nu}^{-}(\alpha))^2 d\alpha ,\; \mu, \nu \in\WS.
\end{equation}
We also denote by $\WSac$ the set of measures $\nu \in \WS$ that are absolutely continuous with respect to the Lebesgue measure $dx$ on $\R$. If $\mu \in \WSac$  then $T^{\ast} = F_\nu^{-} \circ F_\mu$ will be referred to as the optimal mapping to push-forward $\mu$ onto $\nu$ and in this case $d_W^2(\mu,\nu)= \int_\Omega(T^{\ast}(x)-x)^2d\mu(x)$. For a detailed analysis of $\WS$ and its connection with optimal transport theory, we refer to~\cite{villani-topics}.

\subsection{The pseudo Riemannian structure of the Wasserstein space}

In what follows, $\mu_{r}$ denotes a reference measure in $\WSac$, whose choice will be discussed later on. The space $\WS$  has a formal Riemannian structure described, for example, in~\cite{ambrosio2004gradient}.  The tangent space at $\mu_{r}$ is defined as the Hilbert space $\LL$ of real-valued, $\mu_{r}$-square-integrable functions on $\Omega$, equipped with the  inner product $\langle \cdot ,\cdot\rangle_{\mu_{r}} $ defined by
$
\langle u ,v \rangle_{\mu_{r}} = \int_{\Omega} u(x) v(x) d \mu_{r} (x), \; u,v \in \LL,
$
and associated norm $\| \cdot \|_{\mu_{r}}$. We define  the  exponential and the logarithmic maps at $\mu_{r}$, as follows.

\begin{defin} \label{def:explog}
Let ${\rm id} : \Omega \to \Omega$ be the identity mapping. The exponential $\exp_{\mu_{r}}: \LL \to W_2(\mathbb{R})$ and logarithmic $\log_{\mu_{r}}: \WS \to \LL$ maps are defined respectively as
\begin{equation}\label{eq:exp}
 \exp_{\mu_{r}}(v)=({\rm id}+v) \# \mu_{r}   \quad \text{ and } \quad \log_{\mu_{r}}(\nu)=F_\nu^{-} \circ F_{\mu_{r}} - {\rm id}.
\end{equation}
\end{defin}

Contrary to  the setting of Riemannian manifolds, the ``exponential map'' $\exp_{\mu_{r}}$ defined above is not  a local homeomorphism from a neighborhood of the origin in the ``tangent space'' $\LL$ to the space $\WS$, see e.g.~\cite{ambrosio2004gradient}. Nevertheless, it is shown in~\cite{BGKL13} that $\exp_{\mu_{r}}$ is an isometry when  restricted to the following specific set of functions
$$
\VV:=\log_{\mu_{r}}(\WS) = \left\{  \log_{\mu_{r}}(\nu) \; ; \; \nu \in \WS \right\}  \subset \LL,
$$
and that the following results hold (see~\cite{BGKL13}).
\begin{prop}\label{prop_vmur}The subspace  $\VV$ satisfies the following properties :  \label{prop:exp} 
\begin{description}
\item[(P1)] the exponential map $\exp_{\mu_{r}}$ restricted to $\VV$ is an isometric homeomorphism, with inverse $\log_{\mu_{r}}$. We have hence $W_2(\nu,\eta) = ||\log_{\mu_{r}}(\nu)-\log_{\mu_{r}}(\eta)||_{\LL}$.
\item[(P2)] the set $\VV:=\log_\mu(\WS)$ is closed and convex in $\LL$.
\item[(P3)] the space $\VV$ is the set of functions $v \in \LL$ such that $T:={\rm id}+v$ is $\mu_{r}$-almost everywhere non decreasing and that $T(x) \in \Omega$, for $x \in \Omega$.
\end{description}
\end{prop}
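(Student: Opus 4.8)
The plan is to prove the three assertions in the order P1, P3, P2, since the characterisation in P3 is the natural tool for the convexity and closedness statements of P2, while P1 supplies an alternative route to closedness. Everything rests on two explicit facts from one-dimensional optimal transport: pushing $\mu_{r}$ through its own cdf yields the uniform law on $[0,1]$ (because $\mu_{r}\in\WSac$), and for an atomless source the non-decreasing transport map onto a given target is unique almost everywhere.

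First I would establish the isometry in P1. Expanding $\|\log_{\mu_{r}}(\nu)-\log_{\mu_{r}}(\eta)\|_{\mu_{r}}^{2}=\int_{\Omega}\bigl(F_{\nu}^{-}(F_{\mu_{r}}(x))-F_{\eta}^{-}(F_{\mu_{r}}(x))\bigr)^{2}\,d\mu_{r}(x)$ and substituting $\alpha=F_{\mu_{r}}(x)$ — legitimate precisely because $F_{\mu_{r}}$ pushes $\mu_{r}$ forward to the uniform measure on $[0,1]$ — turns the integral into $\int_{0}^{1}(F_{\nu}^{-}(\alpha)-F_{\eta}^{-}(\alpha))^{2}\,d\alpha=d_W^{2}(\nu,\eta)$ by \eqref{eq:wdist}. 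This makes $\log_{\mu_{r}}$ an injective isometry of $\WS$ onto its image $\VV$. To identify the inverse I would compute $\exp_{\mu_{r}}(\log_{\mu_{r}}(\nu))=(F_{\nu}^{-}\circ F_{\mu_{r}})\#\mu_{r}=F_{\nu}^{-}\#(\text{uniform})=\nu$, so $\exp_{\mu_{r}}$ restricted to $\VV$ is a left inverse of the bijective isometry $\log_{\mu_{r}}$, hence equals its inverse; as the inverse of an isometry it is itself an isometry, giving continuity of both maps and thus P1.

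Next, P3. The forward inclusion is immediate: for $v=\log_{\mu_{r}}(\nu)$ one has $\id+v=F_{\nu}^{-}\circ F_{\mu_{r}}$, a composition of the non-decreasing maps $F_{\mu_{r}}$ and $F_{\nu}^{-}$, with values in $\Omega$. For the converse, given $v\in\LL$ with $T:=\id+v$ non-decreasing $\mu_{r}$-a.e.\ and $\Omega$-valued, I would set $\nu:=T\#\mu_{r}=\exp_{\mu_{r}}(v)$, verify $\nu\in\WS$ through $\int x^{2}\,d\nu=\int T^{2}\,d\mu_{r}\le 2\int x^{2}\,d\mu_{r}+2\|v\|_{\mu_{r}}^{2}<\infty$, and then show $T=F_{\nu}^{-}\circ F_{\mu_{r}}$ $\mu_{r}$-a.e., which is exactly $v=\log_{\mu_{r}}(\nu)\in\VV$. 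This last identity is the crux and I expect it to be the main obstacle: it is the uniqueness of the monotone rearrangement in one dimension. Since $T$ and $F_{\nu}^{-}\circ F_{\mu_{r}}$ are both non-decreasing maps transporting the atomless measure $\mu_{r}$ onto the same target $\nu$, they must agree off a $\mu_{r}$-null set; concretely, monotonicity of $T$ forces $F_{\nu}(T(x)-)\le F_{\mu_{r}}(x)\le F_{\nu}(T(x))$, hence $T(x)=F_{\nu}^{-}(F_{\mu_{r}}(x))$ for all $x$ outside a countable set of exceptional values, and that set is $\mu_{r}$-null because $\mu_{r}$ is absolutely continuous.

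Finally, P2. Convexity follows at once from P3 and the fact that $\Omega$ is an interval: for $v_{0},v_{1}\in\VV$ and $t\in[0,1]$, the map $\id+((1-t)v_{0}+tv_{1})=(1-t)(\id+v_{0})+t(\id+v_{1})$ is a convex combination of non-decreasing $\Omega$-valued maps, hence again non-decreasing and $\Omega$-valued, so it lies in $\VV$. For closedness I would invoke P1: $\VV$ is the image of the complete metric space $(\WS,d_W)$ under an isometry, hence complete and therefore closed in $\LL$. A self-contained alternative is to take $v_{n}\to v$ in $\LL$, extract a $\mu_{r}$-a.e.\ convergent subsequence, and observe that the a.e.\ limit of the non-decreasing maps $\id+v_{n}$ is again non-decreasing. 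The only delicate point in either route is the behaviour at the boundary of $\Omega$: the limiting map a priori takes values in $\overline{\Omega}$, and completeness of $\WS$ is likewise sensitive to whether $\Omega$ is closed, so one must either assume $\Omega$ closed or check that boundary values are attained only on a $\mu_{r}$-null set — a bookkeeping issue rather than a conceptual one.
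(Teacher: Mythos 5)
Your argument is correct, and it is worth noting that the paper itself offers no proof of this proposition: it is quoted verbatim from \cite{BGKL13}, so there is nothing internal to compare against. What you have written is essentially a self-contained reconstruction of the standard argument from that reference: the change of variables $\alpha=F_{\mu_{r}}(x)$ (valid because $F_{\mu_{r}}\#\mu_{r}$ is uniform on $[0,1]$ for $\mu_{r}\in\WSac$) identifies $\log_{\mu_{r}}$ with the quantile-function embedding of $\WS$ into $L^{2}(0,1)$, from which (P1) is immediate, and your treatment of the converse inclusion in (P3) via uniqueness of the monotone rearrangement is the right crux and is handled correctly (the exceptional set is the $F_{\mu_{r}}$-preimage of the countable set of levels at which $F_{\nu}$ has a flat piece, which is $\mu_{r}$-null since the pushforward is atomless). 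Your flagged caveat about the boundary of $\Omega$ is real: as literally stated, with $\Omega$ an arbitrary interval, $\WS$ need not be complete and $\VV$ need not be closed unless $\Omega$ is closed (or one reads ``support included in $\Omega$'' as ``included in $\overline{\Omega}$''); the cited reference works under hypotheses that remove this issue, so identifying it as a hypothesis to be imposed rather than a gap in your argument is the correct resolution.
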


Moreover, it follows, from~\cite{BGKL13}, that geodesics in $\WS$ are exactly the image under $\exp_{\mu_{r}}$ of straight lines in $\VV$. This property is stated in the following lemma.

\begin{lem}\label{lem:gamma}
Let $\gamma : [0,1] \to \WS$ be a curve and let $v_0 := \log_{\mu_{r}}(\gamma(0))$, $v_1 := \log_{\mu_{r}}(\gamma(1))$. Then $\gamma = (\gamma_{t})_{t \in [0,1]}$ is a geodesic   if and only if $\gamma_{t} = \exp_{\mu_{r}}((1-t)v_0+tv_1)$, for all $t \in [0,1]$.
\end{lem}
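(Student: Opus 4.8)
The plan is to transport the entire problem from $\WS$ to the convex subset $\VV \subset \LL$ through the isometry of Proposition~\ref{prop:exp}, where geodesics are transparently straight line segments, and then to push the conclusion back by applying $\exp_{\mu_{r}}$. Throughout I use that a curve $\gamma:[0,1]\to\WS$ is a (constant-speed, minimizing) geodesic precisely when $d_W(\gamma_s,\gamma_t)=|s-t|\,d_W(\gamma_0,\gamma_1)$ for all $s,t\in[0,1]$.

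First I would set $v_t := \log_{\mu_{r}}(\gamma_t)$, so that $v_t\in\VV$ and $\gamma_t=\exp_{\mu_{r}}(v_t)$ by (P1), with $v_{t}|_{t=0}=v_0$ and $v_{t}|_{t=1}=v_1$. Using the isometry relation $d_W(\gamma_s,\gamma_t)=\|v_s-v_t\|_{\LL}$ from (P1), the geodesic condition becomes exactly $\|v_s-v_t\|_{\LL}=|s-t|\,\|v_0-v_1\|_{\LL}$. Hence $\gamma$ is a geodesic in $\WS$ if and only if $(v_t)_{t\in[0,1]}$ is a constant-speed geodesic of the Hilbert space $\LL$ that stays in $\VV$, and it remains to characterize such curves. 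For the ``if'' direction, if $v_t=(1-t)v_0+tv_1$ then convexity of $\VV$ (property (P2)) guarantees $v_t\in\VV$ for every $t$, so $\gamma_t=\exp_{\mu_{r}}(v_t)$ is well defined, and the identity $\|v_s-v_t\|_{\LL}=|s-t|\,\|v_0-v_1\|_{\LL}$ is immediate, yielding the geodesic property.

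For the ``only if'' direction I would assume $\|v_s-v_t\|_{\LL}=|s-t|\,\|v_0-v_1\|_{\LL}$; if $v_0=v_1$ the curve is constant and nothing is to prove, so set $L:=\|v_0-v_1\|_{\LL}>0$. Fixing $t\in(0,1)$ and writing $a=v_t-v_0$, $b=v_1-v_t$, one has $\|a\|_{\LL}=tL$, $\|b\|_{\LL}=(1-t)L$ and $\|a+b\|_{\LL}=L=\|a\|_{\LL}+\|b\|_{\LL}$, i.e. equality in the triangle inequality. In the inner-product space $\LL$ this is equivalent to $\langle a,b\rangle_{\mu_{r}}=\|a\|_{\LL}\|b\|_{\LL}$ (equality in Cauchy--Schwarz with nonnegative product), which forces $a$ and $b$ to be nonnegatively collinear; comparing norms gives $a=t(v_1-v_0)$, that is $v_t=(1-t)v_0+tv_1$. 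Applying $\exp_{\mu_{r}}$ and using $\gamma_t=\exp_{\mu_{r}}(v_t)$ then gives $\gamma_t=\exp_{\mu_{r}}((1-t)v_0+tv_1)$, as claimed.

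The main obstacle is the ``only if'' direction, and concretely the equality case of the triangle inequality in $\LL$: this is exactly the point where the strict convexity of the Hilbert norm (equivalently, uniqueness of geodesics in $\LL$) is used, and where the convexity of $\VV$ from (P2) is needed to ensure that the straight segment actually lies in $\VV$ so that $\exp_{\mu_{r}}$ may be applied pointwise. Everything else is a direct translation through the isometry of Proposition~\ref{prop:exp}.
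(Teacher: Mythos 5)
Your proof is correct and complete; note that the paper itself gives no proof of Lemma~\ref{lem:gamma}, stating it as a consequence of the results it cites, so there is no in-paper argument to compare against. Your route --- reducing to $\VV$ via the isometry (P1), using convexity (P2) to keep the segment inside $\VV$, and characterizing constant-speed geodesics of the Hilbert space $\LL$ through the equality case of the triangle inequality --- is exactly the mechanism underlying the cited result, and you rightly make explicit the constant-speed metric definition of geodesic that the paper leaves implicit.
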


\subsection{GPCA for probability measures}

Let $\nu_1,\ldots,\nu_n$ be a set of probability measures in $\WSac$.  Assuming that each $\nu_i$ is absolutely continuous simplify the following presentation, and it is in line with the purpose of   statistical analysis of histograms. We define now the notion of (empirical) GPCA of this set of probability measures by following the approach in~\cite{BGKL13}. The first step is to choose the reference measure $\mu_{r}$. To this end, let us introduce the Wasserstein barycenter~\cite{MR2801182} or Fr\'echet mean of the  $\nu_i$'s, that is defined as the probability measure $\bar{\bnu}$,
$$
\bar{\bnu} = \argmin_{\mu \in \WS} \frac{1}{n} \sum_{i=1}^{n} d_W^{2}(\nu_{i},\mu).
$$
Note that it immediately follows from results in~\cite{MR2801182} that $\bar{\bnu} \in \WSac$, and that its cdf satisfies
\begin{equation}
F_{\bar{\bnu}}^{-} = \frac{1}{n} \sum_{i=1}^{n} F^{-}_{\nu_{i}}. \label{eq:quant}
\end{equation}
A typical choice for the reference measure is to take $\mu_{r} = \bar{\bnu}$ which represents an average location in the data around which can be computed the principal sources of geodesic variability. To introduce the notion of a  principal geodesic subspace of the measures $\nu_1,\ldots,\nu_n$, we need to introduce further notation and definitions. Let $G$ be a subset of $\WS$. The distance between $\mu \in \WS$ and the set $G$ is
$
d_W(\nu,{G}) = \inf_{\lambda \in {G}}d_W(\nu,\lambda),
$
and  the average distance between the data and $G$ is taken as
\begin{equation}
\label{eq:avg_dist}
\costn(G) := \frac{1}{n} \sum_{i=1}^{n} d_W^2(\nu_{i},G).
\end{equation}
\begin{defin}
Let $K$ be some positive integer. A subset $G \subset \WS$ is said to be a geodesic set of dimension $\dim(G) = K$ if $\log_{\mu_{r}}(G)$ is a convex set such that the dimension of the smallest affine subspace of $\LL$ containing $\log_{\mu_{r}}(G)$ is of dimension $K$.
\end{defin}

The notion of  principal geodesic subspace (PGS)   with respect to the reference measure $\mu_{r} = \bar{\bnu}$ can now be presented below.

\begin{defin}\label{def:GPCA}
Let $\CL(W)$ be the metric space  of  nonempty, closed subsets of $\WS$, endowed with the Hausdorff distance, and
$$
\mathrm{CG}_{\bar{\bnu},K}(W) = \left\{  G \in \CL(W) \; | \; \bar{\bnu} \in G, \; \mbox{$G$ is a geodesic set and $\dim(G) \leq K$}   \right\},\; K\ge1.
$$
A principal geodesic subspace (PGS) of $\bnu$ of dimension $K$ with respect to $\bar{\bnu}$ is a set
\begin{equation}
G_K\in \uargmin{G \in \mathrm{CG}_{\bar{\bnu},K}(W)} \costn(G). \label{eq:hatGk}
\end{equation}
\end{defin}
When $K=1$ , searching for the first PGS of $\bnu$  simply amounts to search for a geodesic curve $\gamma^{(1)}$ that is a solution of the following optimization problem:
$$
\tilde{\gamma}^{(1)}:= \uargmin{\gamma} \left\{  \frac{1}{n} \sum_{i=1}^{n} d_W^2(\nu_{i},\gamma) \; | \; \mbox{ $\gamma$ is a geodesic in $\WS$ passing through  $\mu_{r} = \bar{\bnu}$. } \right\}.
$$
We remark that this definition of $\tilde{\gamma}^{(1)}$ as the first principal geodesic curve of variation in $\WS$ is consistent with the usual concept of PCA in a Hilbert space in which geodesic are straight lines.

For a given dimension $k$, the GPCA problem consists in finding a nonempty closed geodesic subset of dimension $k$ which contains the reference measure $\mu_{r}$ and minimizes Eq. \eqref{eq:avg_dist}. We describe in the next section how we can parameterize such sets $G$.

\subsection{Geodesic PCA parameterization}

GPCA can be formulated as an optimization problem in the Hilbert space $\LLemp$. To this end, let us define the functions $\omega_{i} = \log_{\bar{\bnu}}(\nu_{i})$ for $1 \leq i \leq n$ that corresponds to the data mapped in the tangent space. It can be easily checked that this set of functions is centered in the sense that $\frac{1}{n} \sum_{i = 1}^{n} \omega_{i} = 0$. Note that, in a one-dimensional setting, computing $\omega_{i}$ (mapping of the data to the tangent space) is straightforward since the optimal maps $T^{\ast}_{i} = F_{\nu_{i}}^{-} \circ F_{\bar{\bnu}}$  between the  data and their Fr\'echet mean are available in a simple and closed form.
 
 For $\U=\{u_1,\ldots,u_K\}$ a collection of $K\geq 1$ functions belonging to $\LLemp$, we  denote by $\spann(\U)$  the subspace spanned by $u_1,\ldots,u_K$. Defining  $\Pi_{\spann (\U)}v$ as the projection of $v \in \LLemp$ onto $\spann (\U)$, and $\Pi_{\spann (\U) \cap V_{ \bar{\bnu}}(\Omega)}v$ as  the projection of $v$ onto the closed convex set $\spann (\U) \cap V_{ \bar{\bnu}}(\Omega)$, then we have

\begin{prop} \label{prop:CPCA} Let $\omega_{i} = \log_{\bar{\bnu}}(\nu_{i})$ for $1 \leq i \leq n$, and 
 $\U^{\ast}=\{ u^{\ast}_1,\ldots,u^{\ast}_k \}$ be a minimizer of
\begin{equation}
\frac{1}{n} \sum_{i=1}^{n}   \|\omega_{i} - \Pi_{\spann (\U) \cap V_{ \bar{\bnu}}(\Omega)} \omega_{i} \|_{\bar{\bnu}}^2, \label{eq:optimCPCA}
\end{equation}
over orthonormal sets $\U=\{ u_1,\ldots,u_K\}$ of functions in $\LLemp$ of dimension $K$ (namely such that $\langle u_{j}, u_{j'} \rangle_{\bar{\bnu}} = 0$ if $j \neq j'$). If we let
$$
G_{\U^{\ast}} := \exp_{\bar{\bnu}}(\spann (\U^{\ast}) \cap V_{\bar{\bnu}}(\Omega)),
$$
then $G_{\U^{\ast}}$ is a principal geodesic subset (PGS) of dimension $k$ of  the  measures $\nu_1,\ldots,\nu_n$, meaning that  $G_{\U^{\ast}}$ belongs to the set of minimizers of the optimization problem \eqref{eq:hatGk}.
\end{prop}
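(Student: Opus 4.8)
The plan is to transport the entire minimization \eqref{eq:hatGk}, which lives in $\WS$, into the Hilbert space $\LLemp$ by means of the isometry in Proposition~\ref{prop:exp}, and then to recognize the functional \eqref{eq:optimCPCA} as the cost $\costn$ restricted to a well-chosen family of admissible sets. First I would use property (P1) to rewrite, for any $G \in \CL(W)$ with image $C := \log_{\bar{\bnu}}(G) \subset V_{\bar{\bnu}}(\Omega)$,
\begin{equation*}
d_W(\nu_i, G) = \inf_{\lambda \in G} \norm{\log_{\bar{\bnu}}(\nu_i) - \log_{\bar{\bnu}}(\lambda)}_{\bar{\bnu}} = \inf_{w \in C} \norm{\omega_i - w}_{\bar{\bnu}} =: d_{\bar{\bnu}}(\omega_i,C),
\end{equation*}
since as $\lambda$ ranges over $G$ its image $\log_{\bar{\bnu}}(\lambda)$ ranges over $C$ and $\omega_i = \log_{\bar{\bnu}}(\nu_i)$. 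Consequently $\costn(G) = \frac1n\sum_{i=1}^n d_{\bar{\bnu}}^2(\omega_i, C)$, i.e.\ the Wasserstein problem becomes a distance-minimization problem in $\LLemp$ over the images $C$ of the admissible sets $G$.

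The second step is to describe exactly which sets $C$ arise. Because $\exp_{\bar{\bnu}}$ restricted to $V_{\bar{\bnu}}(\Omega)$ is a homeomorphism onto $\WS$ with inverse $\log_{\bar{\bnu}}$, one has: $G$ is closed in $\WS$ iff $C$ is closed in $V_{\bar{\bnu}}(\Omega)$; $G$ is a geodesic set of dimension $\le K$ iff $C$ is convex and contained in an affine subspace of dimension $\le K$; and $\bar{\bnu}\in G$ iff $0=\log_{\bar{\bnu}}(\bar{\bnu})\in C$ (the last identity holding because $\bar{\bnu}\in\WSac$, so $F_{\bar{\bnu}}^{-}\circ F_{\bar{\bnu}}=\mathrm{id}$ almost everywhere). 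Since $0\in C$ and $C$ is convex, the smallest affine subspace containing $C$ passes through the origin, hence is a \emph{linear} subspace $H_0$ with $\dim H_0 \le K$, and in particular $C \subseteq H_0 \cap V_{\bar{\bnu}}(\Omega)$.

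The key, and the only delicate, step is to show that among all admissible $C$ the ``full'' intersections $\spann(\U)\cap V_{\bar{\bnu}}(\Omega)$ are optimal. Given an admissible $C$ with span $H_0$, I would extend $H_0$ by Gram--Schmidt to a $K$-dimensional subspace $H=\spann(\U)$ with $\U$ orthonormal, and consider $\tilde C := H\cap V_{\bar{\bnu}}(\Omega)\supseteq C$. Monotonicity of the distance-to-a-set under inclusion yields $d_{\bar{\bnu}}(\omega_i,\tilde C)\le d_{\bar{\bnu}}(\omega_i,C)$ for every $i$, hence $\costn(\exp_{\bar{\bnu}}(\tilde C))\le \costn(G)$. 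I then must verify that $\tilde C$ is itself admissible: it is convex (intersection of the convex $H$ and the convex $V_{\bar{\bnu}}(\Omega)$, the latter by (P2)), closed (intersection of the closed $H$ and the closed $V_{\bar{\bnu}}(\Omega)$), contains $0$, and lies in the $K$-dimensional $H$; therefore $G_{\U}:=\exp_{\bar{\bnu}}(\tilde C)\in\mathrm{CG}_{\bar{\bnu},K}(W)$ with cost no larger than that of $G$. Since every $K$-dimensional subspace equals $\spann(\U)$ for some orthonormal $\U$, parameterizing subspaces through orthonormal sets as in \eqref{eq:optimCPCA} loses nothing.

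Finally, because $\tilde C=\spann(\U)\cap V_{\bar{\bnu}}(\Omega)$ is closed and convex, the infimum defining $d_{\bar{\bnu}}(\omega_i,\tilde C)$ is attained at the metric projection, so $d_{\bar{\bnu}}^2(\omega_i,\tilde C)=\norm{\omega_i-\Pi_{\spann(\U)\cap V_{\bar{\bnu}}(\Omega)}\omega_i}_{\bar{\bnu}}^2$, which identifies the objective \eqref{eq:optimCPCA} precisely with $\costn(G_{\U})$. Combining the three steps: for any $G\in\mathrm{CG}_{\bar{\bnu},K}(W)$ there is an orthonormal $\U$ with $\costn(G_{\U})\le\costn(G)$, while conversely each $G_{\U}$ is admissible; thus the minimum of \eqref{eq:optimCPCA} over orthonormal $\U$ coincides with the minimum of $\costn$ over $\mathrm{CG}_{\bar{\bnu},K}(W)$, and the minimizer $\U^{\ast}$ produces $G_{\U^{\ast}}$ attaining \eqref{eq:hatGk}. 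The main work is the reduction to the full-intersection form in the third paragraph; the remainder is bookkeeping carried through the isometry.
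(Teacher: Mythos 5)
Your proof is correct. Structurally it follows the same two-step reduction as the paper --- transport the minimization \eqref{eq:hatGk} to the tangent space $\LLemp$ via the isometry (P1), then show that the resulting minimization over closed convex sets $C\subset V_{\bar{\bnu}}(\Omega)$ with $0\in C$ and $\dim C\le K$ is solved by the ``full'' intersections $\spann(\U)\cap V_{\bar{\bnu}}(\Omega)$ --- but where the paper disposes of both steps by citing Propositions 4.3 and 3.3 of \cite{BGKL13}, you prove them directly. Your isometric rewriting of $\costn(G)$ as $\frac1n\sum_i d_{\bar{\bnu}}^2(\omega_i,\log_{\bar{\bnu}}(G))$ is the content of the second citation, and your enlargement argument ($C\subseteq H_0\cap V_{\bar{\bnu}}(\Omega)\subseteq\spann(\U)\cap V_{\bar{\bnu}}(\Omega)$, monotonicity of the distance to a set under inclusion, and the verification that the enlarged set remains admissible) is the content of the first; this is indeed the only delicate point and you handle it correctly. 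Two minor observations: (i) your argument never uses the centering $\sum_i\omega_i=0$ that the paper invokes before citing Proposition 3.3 of \cite{BGKL13} --- and indeed it is not needed for the direction you prove, since $0\in C$ is already forced by the constraint $\bar{\bnu}\in G$; (ii) when checking that $G_{\U}$ is admissible you implicitly use $\log_{\bar{\bnu}}(\exp_{\bar{\bnu}}(\tilde C))=\tilde C$, which is exactly the homeomorphism statement in (P1) and is worth making explicit. The net effect is a self-contained proof of a statement for which the paper gives only a citation chain.
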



\begin{proof}
For $v\in\LLemp$ and a subset $C\in\LLemp$, we define $d_{\bar{\nu}}(v,C)=\inf_{u\in C}\Vert v-u\Vert_{\bar{\nu}}$. Remark that $\sum_i \omega_i=0$. Hence by Proposition 3.3 in \cite{BGKL13}, if $\U^{\ast}$ minimizes
$$
\frac{1}{n}\sum_{i=1}^nd_{\bar{\nu}}^2(\omega_i,\spann(\U^{\ast})\cap V_{\bar{\nu}}(\Omega))=\frac{1}{n}\sum_{i=1}^n\Vert \omega_i-\Pi_{\spann(\U^{\ast})\cap V_{\bar{\nu}}(\Omega)} \omega_i\Vert^2_{\bar{\bnu}},
$$
then $\spann(\U^{\ast})\cap V_{\bar{\nu}}(\Omega)\in\argmin_{C}\frac{1}{n}\sum_{i=1}^n d_{\bar{\nu}}^2(\omega_i,C)$, where $C$ is taken over all nonempty, closed, convex set of $V_{\bar{\nu}}(\Omega)$ such that $dim(C)\le K$ and $0\in C$. By Proposition 4.3 in \cite{BGKL13}, and since $\log_{\bar{\nu}}(\bar{\nu})=0$, we can conclude that $G^{\ast}$ is a geodesic subset of dimension $K$ which minimizes  \eqref{eq:avg_dist}.
\end{proof}

Thanks to Proposition~\ref{prop:CPCA}, it follows that  GPCA in $\WS$ corresponds to a mapping of the data into the Hilbert space $\LLemp$ which is followed by a PCA in $\LLemp$ that is constrained to lie in the convex and closed subset $V_{\bar{\bnu}}(\Omega)$. This has to be interpreted as a geodesicity constraint coming from the definition of  a PGS in $\WS$.

\section{The log-PCA approach}
\label{sec:linear_pca}

For data in a Riemannian manifold, we recall that log-PCA consists in solving a linearized version of the PGA problem by mapping the whole data set to the tangent space at the Fr\'echet mean through the logarithmic map~\cite{geodesicPCA}. This approach is computationally attractive since it boils down to computing a standard PCA.~\cite{Wang2013} used this idea to linearize geodesic PCA in the Wasserstein space $W_2(\mathbb{R}^d)$, by defining the logarithmic map of a probability measure as the barycentric projection of the transport plan with respect to a template measure. This approach has the two drawbacks (1) and (2) of log-PCA mentioned in Section~\ref{sec:related}. A third limitation inherent to the Wasserstein space is that when this template or mean probability measure is discrete, the logarithmic map cannot be defined anywhere. This is why the authors of~\cite{Wang2013} had to apply the barycentric projection, which is a lossy process, to the data set before being able to map it to the tangent space.

We consider as usual a subset $\Omega \subset \R$. In this setting, $\WS$  is a flat space as shown by the isometry property (P1) of Proposition~\ref{prop_vmur}. Moreover, if the Wasserstein barycenter $\bar{\bnu}$ is assumed to be absolutely continuous, then Definition~\ref{def:explog} shows that the logarithmic map at $\bar{\bnu}$ is well defined everywhere. Under such an assumption, log-PCA in $\WS$ corresponds to the following steps:
\begin{enumerate}
	\item compute the log maps (see Definition~\ref{def:explog}) $\omega_i = \log_{\bar{\bnu}}(\nu_i)$, $i=1,\ldots,n$,
	\item perform the PCA of the projected data $\omega_1,\cdots,\omega_n$ in the Hilbert space $\LLemp$ to obtain $K$ orthogonal directions $\tilde{u}_1,\ldots,\tilde{u}_{K}$ in $\LLemp$ of principal variations,
	\item recover a principal subspace of variation in $\WS$ with  the exponential map  $\exp_{\bar{\bnu}}(\spann (\tilde{\U}))$ of the principal eigenspace $\spann(\tilde{\U})$ in $\LLemp$  spanned  by $\tilde{u}_1,\ldots,\tilde{u}_{K}$.
\end{enumerate}

For specific datasets, log-PCA in $\WS$ may be equivalent to GPCA, in the sense that the set $\exp_{\bar{\bnu}}(\spann (\tilde{\U}) \cap V_{\bar{\bnu}}(\Omega))$ is  a principal geodesic subset of dimension $K$ of  the  measures $\nu_1,\ldots,\nu_n$, as defined by \eqref{eq:hatGk}. Informally, this case corresponds to the setting where the data are sufficiently concentrated around their Wasserstein barycenter $\bar{\bnu}$ (we refer to Remark 3.5 in~\cite{BGKL13} for further details). However, carrying out a PCA in the tangent space of $W_2(\mathbb{R})$ at $\bar{\bnu}$ is a relaxation of the convex-constrained GPCA problem \eqref{eq:optimCPCA}, where the elements of the sought principal subspace do not need to be in  $V_{ \bar{\bnu}}$. Indeed, standard PCA in the Hilbert space $\LLemp$, amounts to find $\tilde{\U} = \{ \tilde{u}_1,\ldots,\tilde{u}_{K}\}$ minimizing,
\begin{equation}
\frac{1}{n} \sum_{i=1}^{n}   \|\omega_{i} - \Pi_{\spann (\U)} \omega_{i} \|_{\bar{\bnu}}^2, \label{eq:linearPCA},
\end{equation}
over orthonormal sets $\U=\{ u_1,\ldots,u_k\}$ of functions in $\LLemp$.  It is worth noting that the three steps of log-PCA in $\WS$ are simple to implement and fast to compute, but that performing log-PCA or GPCA \eqref{eq:optimCPCA} in $\WS$ is not necessarily equivalent.

Log-PCA is generally used for two main purposes. The first one is to obtain a low dimensional representation of each data measure $\nu_i =\exp_{\bar{\bnu}}(  \omega_{i}) $ through the coefficients $\dotprod{\omega_i}{\tilde{u}_k}_{L^2_{\bar{\bnu}}}$. 
From this low dimensional representation, the measure  $\nu_i  \in\WS$ can be approximated through the exponential mapping $\exp_{\bar{\bnu}}(\Pi_{\spann (\U)} \omega_{i})$. The second one is to visualize each mode of variation in the dataset, by considering  the evolution of the curve $t \mapsto \exp_{\bar{\bnu}}(t\tilde{u}_k)$ for each $\tilde{u}_k \in \tilde{\U}$.

However, relaxing the convex-constrained GPCA problem \eqref{eq:optimCPCA} when using log-PCA results in several issues. Indeed, as shown in the following paragraphs, not taking into account this geodesicity constraint makes difficult the computation and interpretation of $\exp_{\bar{\bnu}}(\spann (\tilde{\U}))$ as a principal subspace of variation, which may limit its use for data analysis. 

\paragraph{Numerical implementation of pushforward operators} A first downside to the log-PCA approach is the difficulty of the numerical implementation of the pushforward operator in the exponential map $\exp_{\bar{\bnu}}(v)=({\rm id}+v) \# \bar{\bnu}$ when the mapping ${\rm id}+v$ is not a strictly increasing function for a given vector $v \in \spann (\tilde{\U})$. This can be shown with the following proposition, which provides a formula for computing the density of a pushforward operator.
\begin{prop} \label{prop:pushforward_density} (Density of the pushforward) Let $\mu \in W_2(\mathbb{R})$ be an absolutely continuous measure with density $\rho$ (that is possibly supported on an interval $\Omega \subset \R$). Let $T: \mathbb{R} \rightarrow \mathbb{R}$ be a differentiable function such that $|T'(x)| > 0$ for almost every $x \in \R$, and define $\nu = T\# \mu$. Then, $\nu$ admits a density $g$ given by,
\begin{equation}
	\label{eq:pushforward_density}
	g(y) = \sum_{x\in T^{-1}(y)} \frac{\rho(x)}{|T'(x)|}, \; y \in \R.
\end{equation}
When $T$ is injective, this simplifies to,
\begin{equation}
	\label{eq:pushforward_density_injective}
	g(y) = \frac{\rho(T^{-1}(y))}{|T'(T^{-1}(y))|}.
\end{equation}
\end{prop}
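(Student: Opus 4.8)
The plan is to verify the claimed formula directly from the definition of the pushforward measure, tested against arbitrary nonnegative Borel functions, and then to read off the density by comparison. For any bounded Borel $\phi : \R \to \R$, the definition $\nu = T\#\mu$ together with the absolute continuity of $\mu$ gives
\begin{equation*}
\int_\R \phi(y)\,d\nu(y) = \int_\R \phi(T(x))\,d\mu(x) = \int_\R \phi(T(x))\rho(x)\,dx,
\end{equation*}
so it suffices to show that the right-hand side equals $\int_\R \phi(y) g(y)\,dy$ with $g$ as in \eqref{eq:pushforward_density}. It is enough to treat $\phi \geq 0$ (in particular indicators $\mathbf{1}_A$, which is all that is needed to identify a density), and this nonnegativity also legitimizes by Tonelli every interchange of sum and integral below.

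The core step is to reduce the left-hand integral to a sum of one-dimensional substitutions over pieces on which $T$ is strictly monotone. I would partition the line, up to a Lebesgue-null set, into countably many disjoint intervals $(I_k)_k$ on which $T'$ has constant sign: writing $P = \{T' > 0\}$ and $N = \{T' < 0\}$, the hypothesis $|T'| > 0$ a.e.\ makes $P \cup N$ of full measure, and on any subinterval contained in $P$ (resp.\ $N$) the mean value theorem forces $T$ to be strictly increasing (resp.\ decreasing), hence a homeomorphism onto its image with differentiable inverse $T_k^{-1} : T(I_k) \to I_k$ satisfying $(T_k^{-1})'(y) = 1/T'(T_k^{-1}(y))$. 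The classical change of variables $y = T(x)$ on each $I_k$ then yields
\begin{equation*}
\int_{I_k} \phi(T(x))\rho(x)\,dx = \int_{T(I_k)} \phi(y)\,\frac{\rho(T_k^{-1}(y))}{|T'(T_k^{-1}(y))|}\,dy,
\end{equation*}
where the absolute value handles the increasing and decreasing cases uniformly.

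Summing over $k$, and using that the null set $\{T'=0\}$ carries no $\rho\,dx$-mass by absolute continuity of $\mu$, gives
\begin{equation*}
\int_\R \phi(T(x))\rho(x)\,dx = \sum_k \int_{T(I_k)} \phi(y)\,\frac{\rho(T_k^{-1}(y))}{|T'(T_k^{-1}(y))|}\,dy = \int_\R \phi(y) \Big( \sum_{k:\,y\in T(I_k)} \frac{\rho(T_k^{-1}(y))}{|T'(T_k^{-1}(y))|} \Big) dy.
\end{equation*}
The final reindexing is the heart of the matter: for fixed $y$, the intervals $I_k$ with $y \in T(I_k)$ are in bijection with the preimages $x \in T^{-1}(y)$ via $x = T_k^{-1}(y)$ (each monotone piece $I_k$ contains at most one such preimage), so the parenthesized sum equals exactly $\sum_{x \in T^{-1}(y)} \rho(x)/|T'(x)| = g(y)$. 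Comparing with $\int_\R \phi\,d\nu$ identifies $g$ as the density of $\nu$; when $T$ is injective the preimage is the single point $T^{-1}(y)$, which gives \eqref{eq:pushforward_density_injective}.

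I expect the main obstacle to be the rigorous construction of the monotone partition and the associated reindexing under the bare hypothesis that $T$ is differentiable with $|T'| > 0$ almost everywhere: without continuity of $T'$ the sets $P$ and $N$ need not be open, so the decomposition of the line into countably many monotone intervals requires care. This is cleanest, and in fact finite and explicit, for the piecewise-$C^1$, piecewise-monotone maps $T = \mathrm{id} + v$ actually produced by our algorithms. Controlling the null set $\{T'=0\}$ and ensuring it contributes nothing to either integral — which follows from the absolute continuity of $\mu$ — is the remaining technical point.
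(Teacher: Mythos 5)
Your strategy is a hands-on proof of the one-dimensional coarea formula: the paper's proof instead invokes that formula as a black box (citing Krantz--Parks and Evans--Gariepy) with the test function $h=\tfrac{\rho}{|T'|}\mathbf{1}_{T^{-1}(B)}$, and your final reindexing identity
$\sum_{k:\,y\in T(I_k)}\rho(T_k^{-1}(y))/|T'(T_k^{-1}(y))|=\sum_{x\in T^{-1}(y)}\rho(x)/|T'(x)|$
is exactly the content of that formula in dimension one. So you are reproving the paper's key lemma rather than citing it, which is more self-contained and arguably more illuminating for the piecewise-$C^1$ maps $T=\mathrm{id}+v$ that the algorithms actually produce.

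However, there is a genuine gap at the step you yourself flag as the main obstacle, and it is not merely a technicality: under the bare hypotheses ``$T$ differentiable with $|T'|>0$ a.e.'', the claimed partition of $\R$ (up to a null set) into countably many intervals on which $T'$ has constant sign need not exist. Pompeiu-type constructions give a differentiable, strictly increasing $T$ with $T'>0$ a.e.\ but $\{T'=0\}$ dense; then $P=\{T'>0\}$ contains no nondegenerate interval at all, so your partition is vacuous and the mean-value-theorem argument never gets off the ground --- even though the proposition's conclusion is true (and easy) for this $T$, since it is globally injective. More generally, without continuity of $T'$ the complement of $\{T'=0\}$ can have uncountably many degenerate components, and establishing a countable decomposition into maximal intervals of monotonicity of $T$ itself is essentially as hard as the coarea formula you are trying to avoid. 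To close the argument you must either (i) add the hypothesis that $T$ is piecewise $C^1$ (or piecewise strictly monotone), which covers every map appearing in the paper and makes your proof complete and finite, or (ii) replace the partition step by an appeal to the area/coarea formula or to Banach's indicatrix theorem, at which point you have rejoined the paper's proof. To be fair, the paper's own appeal to the coarea formula (stated in the cited references for Lipschitz maps) does not exactly match the stated hypotheses either, but it at least delegates the measure-theoretic difficulty to a reference rather than asserting a decomposition that can fail.
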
 
\begin{proof}
Under the assumptions made on $T$, the coarea formula (which is a more general form of Fubini's theorem, see e.g. \cite{krantz2008geometric} Corollary 5.2.6 or \cite{evans2015measure} Section 3.4.3) states that, for any measurable function $h : \R \to \R$, one has
\begin{equation}
\int_\mathbb{R} h(x)|T'(x)|dx = \int_\mathbb{R}\sum_{x\in T^{-1}(y)} h(x) dy. \label{eq:coarea}
\end{equation}
Let $B$ a Borel set and choose $h(x) = \frac{\rho(x)}{|T'|}\mathbf{1}_{T^{-1}(B)}, \; x$. Hence, using \eqref{eq:coarea}, one obtains that
\begin{equation*}
	\begin{split}
	\int_{T^{-1}(B)} \rho(x)dx  & = \int_\mathbb{R}\sum_{x\in T^{-1}(y)} \frac{\rho(x)}{|T'(x)|}\mathbf{1}_{T^{-1}(B)}(x) dy = \int_B\sum_{x\in T^{-1}(y)} \frac{\rho(x)}{|T'(x)|} dy.\\
	\end{split}
\end{equation*}
The definition of the pushforward $\nu(B) = \mu(T^{-1}(B))$ then  completes the proof.
\end{proof}
The numerical computation of formula \eqref{eq:pushforward_density} or \eqref{eq:pushforward_density_injective} is not straightforward. When $T$ is not injective, computation of the formula \eqref{eq:pushforward_density} must be done carefully by partitioning the domain of $T$ in sets on which $T$ is injective. Such a partitioning depends on the method of interpolation for estimating a continuous density $\rho$ from a finite set of its values on a grid of reals. More importantly, when $T'(x)$ is very small, $\frac{\rho(x)}{T'(x)}$ may become very irregular and the density of $\nu = T\# \mu$ may  exhibit large peaks, see Figure~\ref{echec_logPCA} for an illustrative example.

\begin{figure}[ht!]
\centering
\includegraphics[width= 0.72\textwidth,height=0.49\textwidth]{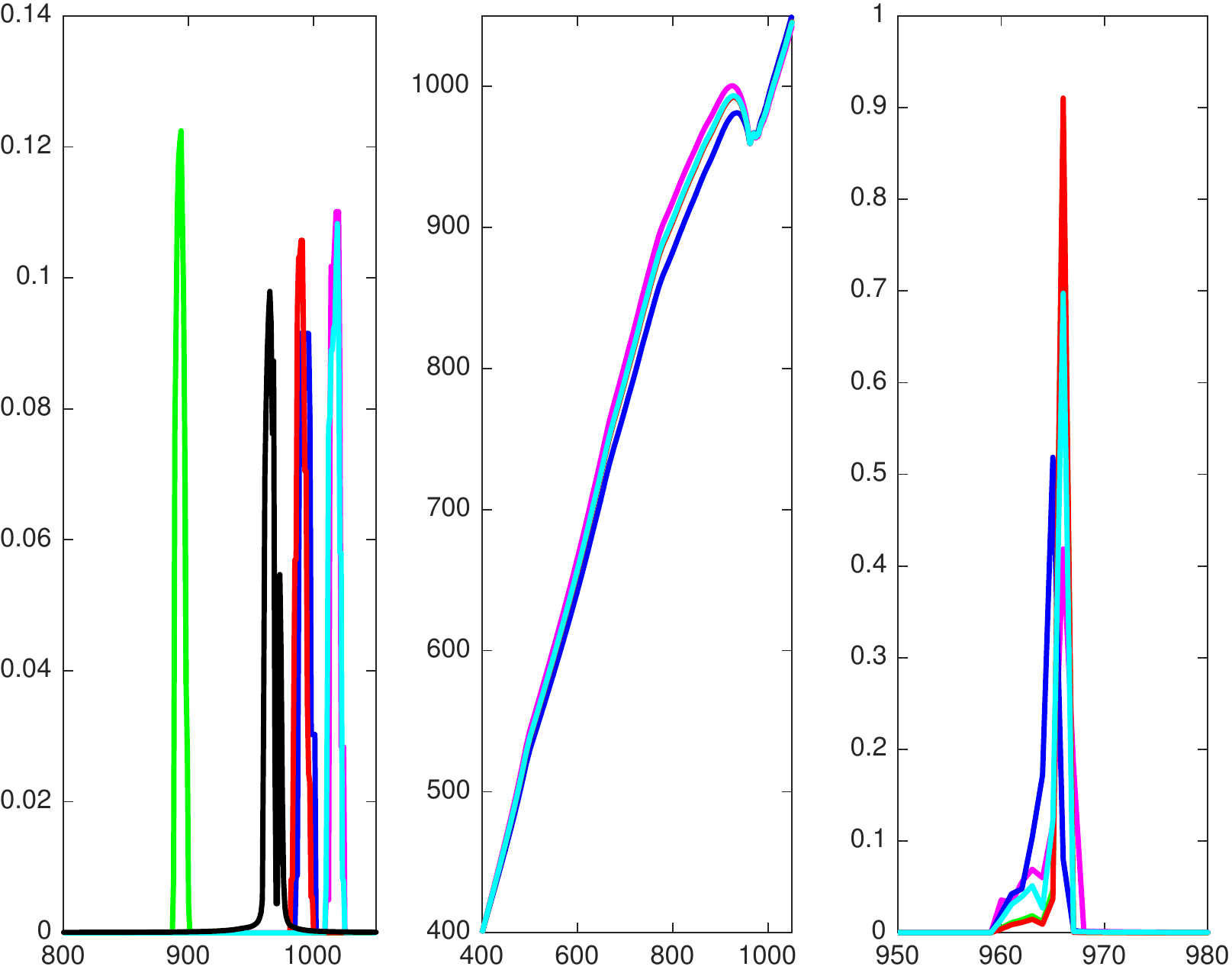} 
\caption{(Left) Distribution of the total precipitation (mm) collected in a year in $1 \leq i \leq 5$ stations among 60 in China - Source : Climate Data Bases of the People's Republic of China 1841-1988 downloaded from http://cdiac.ornl.gov/ndps/tr055.html. The black curve is the density of the Wasserstein barycenter of the 60 stations. (Middle) Mapping $T_{i} = {\rm id} + \Pi_{\spann (\tilde{u}_2)} \omega_{i} $ obtained from the projections of these 5 distributions onto the second eigenvector $\tilde{u}_2$ given by log-PCA of the whole dataset. (Right) Pushforward $\exp_{\bar{\bnu}}(\Pi_{\spann (\tilde{u}_2)} \omega_{i})= T_{i} \# \bar{\bnu}$ of the Wasserstein barycenter $\bar{\bnu}$ for each $1 \leq i \leq 5$. As the derivative $T'_{i}$ take very small values, the densities of the pushforward barycenter $T_{i}\# \bar{\bnu}$  for $1 \leq i \leq 5$ exhibit large peaks (between 0.4 and 0.9) whose amplitude is  beyond the largest values in the original data set (between 0.08 and 0.12).
}\label{echec_logPCA}

\end{figure}

\paragraph{Pushforward of the barycenter outside the support $\Omega$.}  A second downside of log-PCA in $\WS$ is that the range of the mapping $\tilde{T}_{i} = \text{id} + \Pi_{\spann (\tilde{\U})} \omega_{i}$ may be larger than the interval $\Omega$. This implies that the density  of the pushforward of the Wassertain barycenter $\bar{\bnu}$  by this mapping, namely  $  \exp_{\bar{\bnu}}(\Pi_{\spann (\tilde{\U})} \omega_{i})$, may have a support which is not included in $\Omega$. This issue may be critical when trying to estimate the measure $\nu_i =\exp_{\bar{\bnu}}(  \omega_{i}) $ by its projected measure $  \exp_{\bar{\bnu}}(\Pi_{\spann (\tilde{\U})} \omega_{i})$. For example, in a dataset of histograms with bins necessarily containing only positive reals, a projected distribution with positive mass on negative reals would be hard to interpret. 

\paragraph{A higher Wasserstein reconstruction error.} 

Finally, relaxing the geodesicity constraint \eqref{eq:optimCPCA} may actually increase the Wasserstein reconstruction error with respect to  the Wasserstein distance. To state this issue more clearly, we define the  reconstruction error of log-PCA as
\begin{equation}\label{eq:residuals:logPCA}
\tilde{r}(\tilde{\U}) = \frac{1}{n} \sum_{i=1}^{n} d_W^2\big(\nu_i,  \exp_{\bar{\bnu}}(\Pi_{\spann (\tilde{\U})} \omega_{i}) \big).
\end{equation}
and the reconstruction error of GPCA as
\begin{equation}\label{eq:residuals:GPCA}
r(\U^{\ast}) = \frac{1}{n} \sum_{i=1}^{n} d_W^2\big(\nu_i,  \exp_{\bar{\bnu}}(\Pi_{\spann (\U^{\ast}) \cap V_{\bar{\bnu}}(\Omega)} \omega_{i}) \big).
\end{equation}
where $\U^{\ast}$ is a minimiser of \eqref{eq:optimCPCA}. Note that in \eqref{eq:residuals:logPCA}, the projected measures $\exp_{\bar{\bnu}}(\Pi_{\spann (\tilde{\U})} \omega_{i})$ might have a support that lie outside $\Omega$. Hence, the Wasserstein distance $d_W$ in \eqref{eq:residuals:logPCA} has  to be understood for measures supported on $\R$ (with the obvious extension to zero of $\nu_i$ outside $\Omega$).

 The Wasserstein reconstruction error $\tilde{r}(\tilde{\U})$ of log-PCA is the sum of the Wasserstein distances of each data point $\nu_i$ to a point on the surface $\exp_{\bar{\bnu}}(\spann (\tilde{\U}))$ which is given by the decomposition of $\omega_i$ on the orthonormal basis $\tilde{\U}$. However, by Proposition~\ref{prop_vmur}, the isometry property (P1) only holds  between $W_2(\mathbb{R})$ and the convex subset $V_{ \bar{\bnu}} \subset L^2_{\bar{\bnu}}(\mathbb{R})$. Therefore, we may not have $d_W^2\big(\nu_i, \exp_{\bar{\bnu}}\big( \Pi_{\spann (\tilde{\U)}} \omega_{i} \big) \big) =  \|\omega_{i} - \Pi_{\spann (\tilde{\U})} \omega_{i} \|_{\bar{\bnu}}^2$ as $\Pi_{\spann (\tilde{\U})} \omega_{i}$ is a function belonging to $L^2_{\bar{\bnu}}(\mathbb{R})$ which may not necessarily be in $V_{\bar{\bnu}}$. In this case, the minimal Wasserstein distance between $\nu_{i}$ and the surface $\exp_{\bar{\bnu}}(\spann (\U^{\ast}))$ is not equal to $\|\omega_{i} - \Pi_{\spann (\U)} \omega_{i} \|_{\bar{\bnu}}$, and this leads to situations where $\tilde{r}(\tilde{\U})  > r(\U^{\ast})$ as illustrated in Figure~\ref{residual}.
\begin{figure}[ht!]
\centering
\includegraphics[width=0.4\textwidth,height=0.36\textwidth]{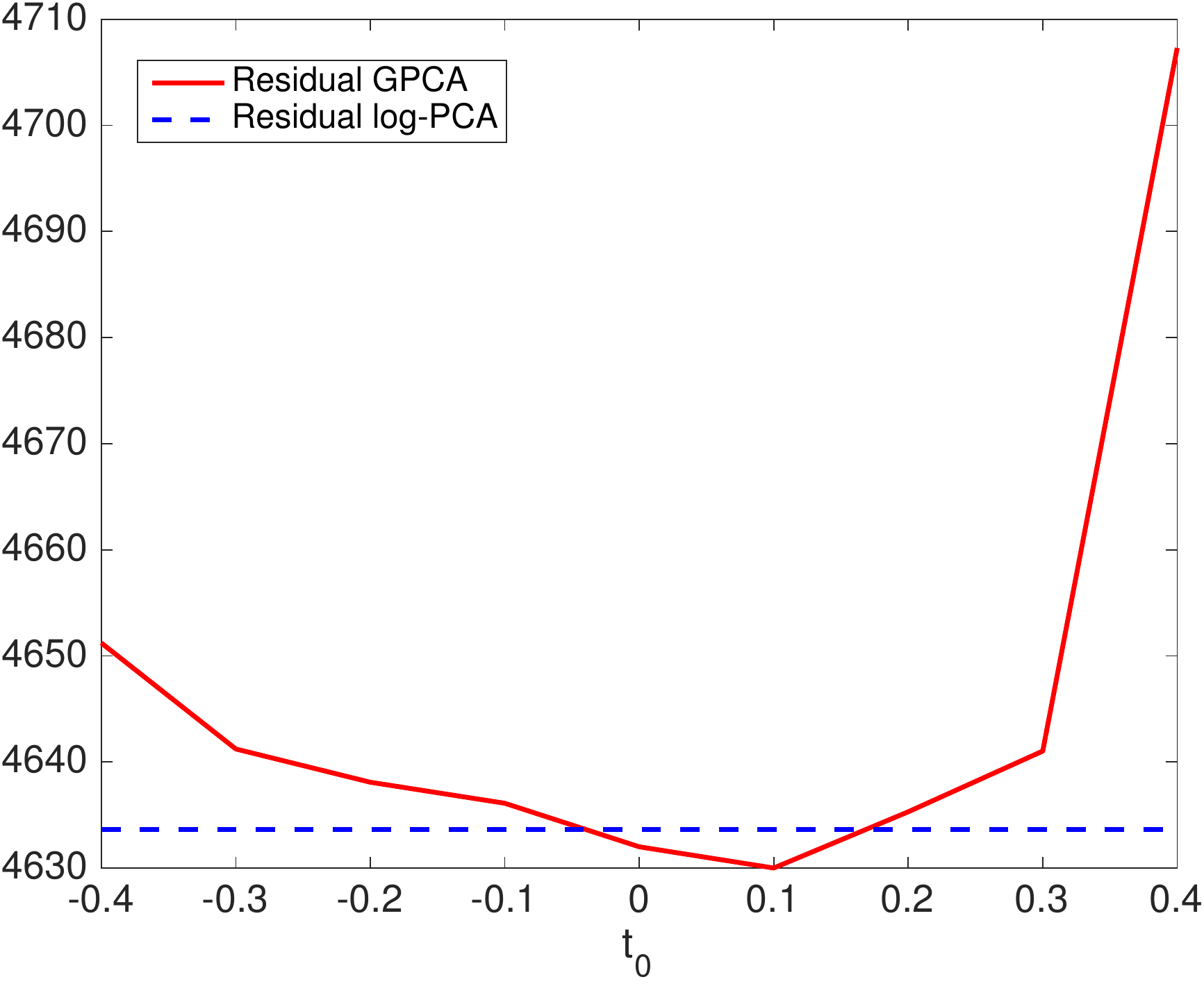}\vspace{-0.2cm}
\caption{Comparison of the Wasserstein reconstruction error between GPCA and log-PCA on the synthetic dataset displayed in Figure~\ref{eucli_wass_gaussian} for the first component, with an illustration of the role of the parameter $t_{0}$ in \eqref{eq:optimCPCA3}.}
\label{residual}
\end{figure}

\section{Two algorithmic approaches for GPCA in $\WS$, for $\Omega\subset\R$} \label{sec:reformulation}

In this section, we introduce two algorithms which solve some of the issues of log-PCA that have been raised in Section \ref{sec:linear_pca}. First, the output of the proposed algorithms guarantees that the computation of mappings to pushforward the Wassertein barycenter to approximate elements in the datasets are strictly increasing (that is they are optimal). As a consequence, the resulting pushforward density behaves numerically much better. Secondly, the geodesic curve or surface are constrained to lie in $\WS$, implying that the projection of the data are distributions whose supports do not lie outside $\Omega$. 

\subsection{Iterative geodesic approach}

In this section, we propose an algorithm to solve a variant of the convex-constrained GPCA problem \eqref{eq:optimCPCA}. Rather than looking for a geodesic subset of a given dimension which fits well the data, we find iteratively orthogonal principal geodesics (i.e. geodesic set of dimension one).
Assuming that that we already know  a subset $\U^{k-1} \subset \LLemp$ containing $k-1$ orthogonal principal directions $\{u_l\}_{l=1}^{k-1}$ (with $\U^{0} = \emptyset$), our goal is  to find a new direction ${u}^{}_k\in\LLemp$ of principal variation by solving the optimisation problem:
\begin{equation}
{u}^{}_k\in\uargmin{v\perp{\U}^{k-1}}\frac{1}{n} \sum_{i=1}^{n}   \|\omega_{i} - \Pi_{\spann({v}) \cap V_{ \bar{\bnu}}(\Omega)} \omega_{i} \|_{\bar{\bnu}}^2, \label{eq:optimCPCA2}
\end{equation}

where the infimum above is taken over all $v \in \LLemp$ belonging to the orthogonal of $\U^{k-1}$. This iterative process is not equivalent to the GPCA problem \eqref{eq:optimCPCA}, with the exception of the first principal geodesic ($k=1$). Nevertheless, it computes principal subsets $\U^{k}$ of dimension $k$ such that the projections of the data onto every direction of principal variation lie in the convex set $V_{ \bar{\bnu}}$.
%

The following proposition is the key result to derive an algorithm to solve \eqref{eq:optimCPCA2} on real data. 
\begin{prop} \label{prop:equivGPCA}
Introducing the characteristic function of the convex set $V_{\bar{\bnu}}(\Omega)$ as:
$$\chi^{}_{V_{\bar{\bnu}}(\Omega)}(v)=\left\{\begin{array}{ll}0&if\,  v\in V_{\bar{\bnu}}(\Omega)\\
+\infty&otherwise\end{array}\right.
$$
the optimisation problem \eqref{eq:optimCPCA2} is equivalent to 
\begin{equation}
{u}^{}_k=\uargmin{v\perp{\U}^{k-1}}
\min_{t_0\in[-1;1]} H(t_{0},v),  \label{eq:optimCPCA3}
\end{equation}
where
\begin{equation}
\label{eq:optimCPCA3_1D}
H(t_{0},v) = \frac{1}{n} \sum_{i=1}^{n}  \min_{t_i\in[-1;1]} \|\omega_{i} - (t_0+t_i) v \|^2_{\bar{\bnu}}+\chi^{}_{V_{\bar{\bnu}}(\Omega)}((t_0-1)v)+\chi^{}_{V_{\bar{\bnu}}(\Omega)}((t_0+1)v). 
\end{equation}
\end{prop}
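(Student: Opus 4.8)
The plan is to read both objectives as averages of squared distances from the $\omega_i$ to a \emph{segment} of the line $\spann(v)$, and then to show these segments coincide after a reparameterization that absorbs the length of the maximal feasible segment into the free scaling of $v$. Throughout write $d_{\bar{\bnu}}(w,C) = \inf_{u\in C}\|w-u\|_{\bar{\bnu}}$.

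First I would describe the set $\spann(v)\cap V_{\bar{\bnu}}(\Omega)$ appearing in \eqref{eq:optimCPCA2}. Since $V_{\bar{\bnu}}(\Omega)$ is closed and convex by (P2) and contains $0 = \log_{\bar{\bnu}}(\bar{\bnu})$, its intersection with the line $\spann(v) = \{sv : s\in\R\}$ is a segment $\{sv : s\in[\sigma_-,\sigma_+]\}$ with $\sigma_- \le 0 \le \sigma_+$ (the unbounded case, possible only when $\Omega=\R$, I would treat separately by a limiting argument). Consequently the summand in \eqref{eq:optimCPCA2} equals $d_{\bar{\bnu}}^2(\omega_i,\spann(v)\cap V_{\bar{\bnu}}(\Omega))$, and the whole objective depends on $v$ only through the line it spans, being invariant under $v\mapsto cv$ for $c\neq 0$. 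Next I would decode \eqref{eq:optimCPCA3_1D}: as $t_i$ ranges over $[-1,1]$ the point $(t_0+t_i)v$ sweeps exactly the segment $S_{t_0,v} := \{(t_0+t)v : t\in[-1,1]\} = \{sv : s\in[t_0-1,t_0+1]\}$, so the inner minimum is $d_{\bar{\bnu}}^2(\omega_i,S_{t_0,v})$. By convexity of $V_{\bar{\bnu}}(\Omega)$ the two characteristic terms vanish precisely when both endpoints $(t_0\pm1)v$ lie in $V_{\bar{\bnu}}(\Omega)$, which is equivalent to $S_{t_0,v}\subset V_{\bar{\bnu}}(\Omega)$; otherwise $H(t_0,v)=+\infty$. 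Hence $H(t_0,v)=\tfrac1n\sum_i d_{\bar{\bnu}}^2(\omega_i,S_{t_0,v})$ on the admissible set and $+\infty$ off it.

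I would then prove the two minimal values agree by a two-sided inequality, which also matches the minimizing lines. For ``$\ge$'': for any admissible $(t_0,v)$ one has $S_{t_0,v}\subset \spann(v)\cap V_{\bar{\bnu}}(\Omega)$, so every distance can only grow, giving $\min_{t_0\in[-1,1]} H(t_0,v) \ge \tfrac1n\sum_i d_{\bar{\bnu}}^2(\omega_i,\spann(v)\cap V_{\bar{\bnu}}(\Omega))$, and minimizing over $v\perp\U^{k-1}$ yields the bound. For ``$\le$'': given a line spanned by a unit vector $v_0\perp\U^{k-1}$ with nondegenerate feasible segment, set $v = \tfrac{\sigma_+-\sigma_-}{2}\,v_0$ (half-length) and $t_0 = \tfrac{\sigma_++\sigma_-}{\sigma_+-\sigma_-}$ (normalized midpoint). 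Then $(t_0\pm1)v = \sigma_\pm v_0$, so $S_{t_0,v} = \spann(v)\cap V_{\bar{\bnu}}(\Omega)$ exactly, the characteristic terms vanish by closedness of $V_{\bar{\bnu}}(\Omega)$, and $H(t_0,v)$ equals the \eqref{eq:optimCPCA2}-objective on that line; taking the infimum over lines gives the reverse bound. The crucial point is that $\sigma_-\le0\le\sigma_+$ forces $|\sigma_++\sigma_-|\le\sigma_+-\sigma_-$, i.e. $t_0\in[-1,1]$, so this choice is admissible in \eqref{eq:optimCPCA3}.

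Combining the two bounds shows the minima coincide and their minimizing lines agree, which is the claimed equivalence; the minimizer $u_k$ of \eqref{eq:optimCPCA3} is simply the length-rescaled representative of the minimizing line of \eqref{eq:optimCPCA2}. I expect the reparameterization to be the main obstacle, specifically the verification that the normalized midpoint satisfies $t_0\in[-1,1]$, which hinges entirely on $0\in V_{\bar{\bnu}}(\Omega)$ (so that $\sigma_-\le0\le\sigma_+$). The only remaining technical nuisance is the possibly unbounded feasible interval for $\Omega=\R$, which I would dispatch either by a limiting argument on truncated segments or by observing that orthogonality to $\U^{k-1}$ together with the residual objective keeps the relevant optimum attained at a bounded segment.
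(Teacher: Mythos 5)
Your proof is correct and follows essentially the same route as the paper's: both hinge on reparameterizing a segment $\{sv_0 : s\in[\alpha,\beta]\}$ of the line as $\{(t_0+t)v : t\in[-1,1]\}$ with $v=\tfrac{\beta-\alpha}{2}v_0$ and $t_0=\tfrac{\beta+\alpha}{\beta-\alpha}$, and on convexity of $V_{\bar{\bnu}}(\Omega)$ to pass between the two endpoint constraints and containment of the whole segment. The only (minor) divergence is the witness segment in the achievability direction --- the paper takes $[\alpha,\beta]=[\min_i\beta_i,\max_i\beta_i]$, the range of the data's projection coefficients, whereas you take the full feasible segment $[\sigma_-,\sigma_+]$ with $\spann(v_0)\cap V_{\bar{\bnu}}(\Omega)=\{sv_0 : s\in[\sigma_-,\sigma_+]\}$ --- and your variant is if anything slightly cleaner, since $t_0\in[-1,1]$ then follows directly from $0\in V_{\bar{\bnu}}(\Omega)$ rather than from the paper's ``without loss of generality'' assumption that the $\beta_i$ straddle zero.
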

\begin{proof}
We first observe that  $\Pi_{\spann({u}) \cap V_{ \bar{\bnu}}(\Omega)} \omega_{i}=\beta_i u$, with  $\beta_i\in\mathbb{R}$ and $\beta_i u\in   V_{ \bar{\bnu}}(\Omega)$.
Hence, for $u_k$ solution of \eqref{eq:optimCPCA2}, we have:
$$\frac{1}{n} \sum_{i=1}^{n}   \|\omega_{i} - \Pi_{\spann({u^{}_k}) \cap V_{ \bar{\bnu}}(\Omega)} \omega_{i} \|_{\bar{\bnu}}^2=\frac{1}{n} \sum_{i=1}^{n}   \|\omega_{i} -\beta_i u^{}_k  \|_{\bar{\bnu}}^2.$$
such that $\beta_i\in\mathbb{R}$ and $\beta_i u^{}_k\in   V_{ \bar{\bnu}}(\Omega)$ for all $i\in\{1,\ldots,n\}$. We take $M\in\argmax_{1 \leq i \leq n}\beta_i$ and $m\in\argmin_{1 \leq i \leq n}\beta_i$. Without loss of generality, we can assume that $\beta_M> 0$ and $\beta_m< 0$. We then define $v=(\beta_M-\beta_m) u^{}_k/2$ and  $t_0=(\beta_M+\beta_m)/(\beta_M-\beta_m)$, that checks $|t_0|<1$. Hence, for all $i=1,\ldots, n$, there exists $t_i\in[-1;1]$ such that:
$\beta_i u^{}_k=(t_0+t_i)v\in   V_{ \bar{\bnu}}$.
In particular, one has $t_M=1$ and $t_m=-1$, which means that $(t_0\pm 1)v\in V_{\bar{\bnu}}(\Omega)$. Reciprocally, $(t_0\pm 1)v\in V_{\bar{\bnu}}(\Omega)$ insure us by convexity of $V_{\bar{\bnu}}(\Omega)$ that for all $t_i\in[-1;1]$, $(t_0+t_i)v\in V_{\bar{\bnu}}(\Omega)$.
\end{proof}

Proposition \ref{prop:equivGPCA} may be interpreted as follows. For a given $t_{0} \in [-1;1]$, let $v \in \perp{\U}^{k-1}$ satisfying $(t_0-1)v \in V_{\bar{\bnu}}$ and $(t_0+1)v \in V_{\bar{\bnu}}$. Then, if one defines the curve
\begin{equation}
\label{eq:geodesic_parameterization}
g_t(t_0, v) = (id + (t_0 + t)v)\# \bar{\bnu} \mbox{ for } t \in [-1;1],
\end{equation}
it follows, from Lemma \ref{lem:gamma},  that $(g_t(t_0,v))_{t \in [-1;1]}$  is a geodesic since it can be written as $g_t(t_0,v) = \exp_{\bar{\bnu}}((1-u)w_0+uw_1), u \in [0,1]$ with $w_0 = (t_0-1)v,~w_1 = (t_0+1)v$, $u = (t+1)/2$, and with $w_0$ and $w_1$ belonging to $V_{\bar{\bnu}}$ for $|t_0|<1$. From the isometry property (P1) in Proposition \ref{prop_vmur},  one has
\begin{equation}
\label{eq:iterative_geodesic_isometry}
\min_{t_i\in[-1;1]} \|\omega_{i} - (t_0+t_i) v \|^2_{\bar{\bnu}} = \min_{t_i\in[-1;1]} d_W^2(\nu_i, g_{t_i}(v)),
\end{equation}
and thus the objective function $H(t_{0},v)$ in \eqref{eq:optimCPCA3} is equal to the sum of the squared Wasserstein  distances between the dataset to the geodesic curve  $(g_t(t_0,v))_{t \in [-1;1]}$.

The choice of the parameter $t_{0}$ corresponds to the location of the mid-point of the geodesic $g_t(t_0,v)$, and it plays a crucial role. Indeed, the minimisation of  $H(t_{0},v)$ over $t_{0} \in [-1;1]$ in \eqref{eq:optimCPCA3} cannot be avoided to obtain an optimal Wasserstein reconstruction error. This is illustrated by the Figure \ref{residual}, where the  Wasserstein reconstruction error $\tilde{r}(\tilde{\U})$ of log-PCA (see relation \eqref{eq:residuals:logPCA}) is compared with the ones of GPCA, for different $t_0$, obtained for $k=1$ as\vspace{-0.1cm}
\begin{equation*}
t_{0} \in [-1;1] \mapsto H(t_{0},{u}^{t_{0}}_1)
\end{equation*}
with ${u}^{t_{0}}_1=\argmin_{v} H(t_{0},v)$.We therefore exhibit that GPCA can lead to a better low dimensional  data representation than log-PCA in term of Wasserstein residuals.

\subsection{Geodesic surface approach}

Once a family of vectors $(v_1,\cdots,v_k)$ has been found through the minisation of problem \eqref{eq:optimCPCA2}, one can recover a geodesic subset of dimension $k$ by considering all convex combinations of the vectors $((t_0^1+1)v_1,(t_0^1-1)v_1,\cdots,(t_0^k+1)v_k,(t_0^k-1)v_k)$. However, this subset may not be a solution of \eqref{eq:optimCPCA} since we have no guarantee that a data point $\nu_i$ is actually close to this geodesic subset. This discussion suggests that we may consider solving the GPCA problem \eqref{eq:optimCPCA} over geodesic set parameterized as in Proposition \ref{eq:optimCPCA2}. 
In order to find principal geodesic subsets which are close to the dataset, we consider a family $V^K=(v_1,\cdots,v_K)$ of linearly independant vectors and $\mathbf{t}_0^K = (t_0^1,\cdots,t_0^K) \in [-1,1]^K$ such that $(t_0^1-1)v_1,(t_0^1+1)v_1,\cdots,(t_0^K-1)v_K,(t_0^K+1)v_K$ are all in $V_{ \bar{\bnu}}$. Convex combinations of the latter family  provide a parameterization of a geodesic set of dimension $K$ by taking the exponential map $\exp_{\bar{\bnu}}$ of\vspace{-0.1cm}
\begin{equation}
	\hat{V}_{\bar{\bnu}}(V^K,\mathbf{t}_0^K) = \{ \sum_{k=1}^K(\alpha_{k}^{+}( t_0^k+1)+\alpha_{k}^{-}(t_0^k-1))v_k, \alpha^{\pm}\in A \}\vspace{-0.2cm}
\end{equation}
where $A$ is a simplex constraint: $\alpha^{\pm}\in A \Leftrightarrow  \alpha_{k}^{+},\alpha_{k}^{-}\geq 0$ and $\sum_{k=1}^K(\alpha_{k}^{+}+\alpha_{k}^{-})\leq 1$. We hence substitute the general sets $\spann (\U) \cap V_{ \bar{\bnu}}(\Omega)$ in the definition of the GPCA problem \eqref{eq:optimCPCA} to obtain,\vspace{-0.1cm}
\begin{eqnarray} \label{eq:optimALL2}
(u_1,\cdots,u_K) &= \uargmin{V^K, \mathbf{t}_0^K}&\frac{1}{n} \sum_{i=1}^{n}   \|\omega_{i} - \Pi_{\hat{V}_{\bar{\bnu}}(V^K,t^K)} \omega_{i} \|_{\bar{\bnu}}^2,\nonumber\\
&=\uargmin{v_1,\cdots,v_K}&\umin{\mathbf{t}_0^K\in [-1,1]^K} \frac{1}{n}\sum_{i=1}^n\umin{\alpha_{i}^{\pm} \in A}\Vert \omega_i-\sum_{k=1}^K(\alpha_{ik}^{+}( t_0^k+1)+\alpha_{ik}^{-}(t_0^k-1))v_k\Vert_{\bar{\bnu}}^2\\
&& \hspace{0.6cm}+\sum_{k=1}^K
\left(\chi_{V_{\bar{\bnu}}(\Omega)}((t_0^k+1)v_k)
+\chi_{V_{\bar{\bnu}}(\Omega)}((t_0^k-1)v_k)\right)+\sum_{i=1}^n\chi_A(\alpha_{i}^{\pm}).\nonumber
\end{eqnarray}


\subsection{Discretization and Optimization}\label{sec:algo}

In this section we follow the framework of the iterative geodesic algorithm. We provide additional details When the optimization procedure of the geodesic surface approach differs from the iterative one.

\subsubsection{Discrete optimization problem}

Let $\Omega = [a;b]$ be a compact interval, and consider its discretization over $N$ points $a=x_1<x_2<\cdots <x_N=b$,  $\Delta_j=x_{j+1}-x_j$, $j=1,\ldots, N-1$.
We recall that the functions $\omega_{i} = \log_{\bar{\bnu}}(\nu_{i})$ for $1 \leq i \leq n$ are elements of  $\LLemp$ which correspond to the mapping of the data to the tangent space at the Wasserstein barycenter $\bar{\bnu}$.  In what follows, for each $1 \leq i \leq n$, the discretization of the function $\omega_{i}$  over the grid reads $\mathbf{w}_{i}=(w_{i}^j)_{j=1}^N\in \R^N$. 
We also recall that $\chi^{}_A(u)$ is the characteristic function of a given set $A$, namely $\chi^{}_A(u)=0$ if $u\in A$ and $+\infty$ otherwise. Finally, the space $\R^N$ is understood to be endowed with the following inner product and norm 
$
\langle \mathbf{u},\mathbf{v} \rangle_{\bar{\bnu}} = \sum_{j=1}^N \bar{\bfun}(x_j) u_j v_j$  and $\| \mathbf{v} \|^{2}_{\bar{\bnu}}=\langle  \mathbf{v},  \mathbf{v} \rangle_{\bar{\bnu}} 
$ for $\mathbf{u}, \mathbf{v} \in \R^N, $ 
where $\bar{\bfun}$ denotes the density of the measure $\bar{\bnu}$.
Let us now suppose that we have already computed  $k-1$ orthogonal (in the sense $\langle \mathbf{u},\mathbf{v} \rangle_{\bar{\bnu}}=0$) vectors $\mathbf{u}^{}_1,\cdots \mathbf{u}^{}_{k-1}$ in $\R^N$ which stand for the discretization of orthonormal functions $u_1,\ldots,u_{k-1}$ in $\LLemp$ over the grid $(x_j)_{j=1}^N$.

Discretizing problem \eqref{eq:optimCPCA3} for a fixed $t_0\in]-1;1[$, our goal is  to find a new direction $\mathbf{u}^{}_k\in\R^N$ of principal variations by solving the following problem over all $\mathbf{v}=\{v_j\}_{j=1}^N\in\R^N$: \vspace{-0.2cm}
\begin{equation}
\mathbf{u}^{}_k\in\uargmin{\mathbf{v}\in\R^N}  \frac{1}{n} \sum_{i=1}^n \left(  \min_{t_i\in[-1;1]} \| \mathbf{w}_i-(t_0+t_i)\mathbf{v} \|^{2}_{\bar{\bnu}}\right) +\chi^{}_S(\mathbf{v}) +\chi^{}_{V}((t_0-1)\mathbf{\mathbf{v}})+\chi^{}_{V}((t_0+1)\mathbf{\mathbf{v}})
,\vspace{-0.1cm} \label{eq:discretepb}
\end{equation}
where 
$S=\{\mathbf{v}\in\R^N\, \textrm{s.t. } \langle \mathbf{v},\mathbf{u}^{}_l \rangle_{\bar{\bnu}} =0,\,  l=1\cdots k-1\},$
is a convex set that deals with the orthogonality constraint $\mathbf{v}\perp\U^{k-1}$ and $V$ corresponds to the discretization of the constraints contained in  $V_{\bar{\bnu}}(\Omega)$. From Proposition \ref{prop_vmur} (P3), we have that $\forall v\in V_{\bar{\bnu}}(\Omega)$, $T:=id+v$ is non decreasing and $T(x)\in\Omega$ for all $x\in\Omega$.
 Hence the discrete convex set $V$ is defined as\vspace{-0.1cm}
$$V=\{\mathbf{v}\in\R^N\, \textrm{s.t. }   x_{j+1}+v_{j+1}\geq x_j+v_j,\,  j=1\cdots N-1 \textrm{ and } x_j+v_j\in [a;b],\, j=1\cdots N\}$$
and  can be rewritten as the intersection of two  convex sets dealing with each constraint separately. 

\begin{prop}
One has\vspace{-0.1cm}
$$\chi^{}_{V}((t_0-1)\mathbf{\mathbf{v}})+\chi^{}_{V}((t_0+1)\mathbf{\mathbf{v}})=\chi^{}_{D}(\mathbf{\mathbf{v}})+\chi^{}_{\spaceK}(K\mathbf{\mathbf{v}}),$$
where the convex sets $D$ and $\spaceK$ respectively deal with the domain constraints $x_j+(t_0+1)v_j\in[a;b]$ and $x_j+(t_0-1)v_j\in[a;b]$, i.e.:\vspace{-0.1cm}
\begin{equation}
\label{op:D}
 D=\{\mathbf{v}\in \R^N,\, \textrm{s.t. } m_j\leq v_j\leq M_j\},
 \end{equation}
with 
$
m_j=\max\left(\frac{a-x_j}{t_0+1},\frac{b-x_j}{t_0-1}\right)$ and $M_j= \min\left(\frac{a-x_j}{t_0-1},\frac{b-x_j}{t_0+1}\right), 
$
and the non decreasing constraint of $id+(t_0\pm1)\mathbf{v}$:\vspace{-0.1cm}
\begin{equation}\label{op:E}\spaceK=\{ \mathbf{z}\in\R^N\, \textrm{s.t. }  -1/(t_0+1)\leq z_j\leq 1/(1-t_0)
\}.\end{equation}
with the differential operator $K:\R^N\to\R^N$  computing the discrete derivative of $\mathbf{v} \in \R^n$ as\vspace{-0.1cm}
\begin{equation}\label{op:K}(K\mathbf{v})_j =\left\{\begin{array}{ll}(v_{j+1}-v_j)/(x_{j+1}-x_j)&\textrm{if } 1\leq j< N\\ 0&\textrm{if }j=N,\end{array}\right.\end{equation} 
%
%
\end{prop}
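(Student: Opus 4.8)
The plan is to prove equality of the two functions by showing that both are $\{0,+\infty\}$-valued with the same zero set. Since $\chi_V(\mathbf{w})=0$ exactly when $\mathbf{w}\in V$, the left-hand side $\chi_V((t_0-1)\mathbf{v})+\chi_V((t_0+1)\mathbf{v})$ vanishes precisely when both $(t_0-1)\mathbf{v}\in V$ and $(t_0+1)\mathbf{v}\in V$, and equals $+\infty$ otherwise; likewise the right-hand side vanishes precisely when $\mathbf{v}\in D$ and $K\mathbf{v}\in\spaceK$. It therefore suffices to check that these two feasible sets coincide. First I would split membership $\mathbf{w}\in V$ into its two independent families of scalar constraints: the box constraints $a\le x_j+w_j\le b$ for $1\le j\le N$, and the monotonicity constraints $(x_{j+1}+w_{j+1})-(x_j+w_j)\ge 0$, which by the definition of $K$ in \eqref{op:K} and $x_{j+1}-x_j>0$ read $(K\mathbf{w})_j\ge -1$ for $1\le j\le N-1$.

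Next I would apply this to $\mathbf{w}=(t_0+1)\mathbf{v}$ and $\mathbf{w}=(t_0-1)\mathbf{v}$, using the strict signs $t_0+1>0$ and $t_0-1<0$ (both valid since $|t_0|<1$). For the box constraints, dividing $a-x_j\le(t_0+1)v_j\le b-x_j$ by $t_0+1>0$ preserves the inequalities and gives $\frac{a-x_j}{t_0+1}\le v_j\le\frac{b-x_j}{t_0+1}$, while dividing $a-x_j\le(t_0-1)v_j\le b-x_j$ by $t_0-1<0$ reverses them and gives $\frac{b-x_j}{t_0-1}\le v_j\le\frac{a-x_j}{t_0-1}$. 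Intersecting the two bounds yields exactly $m_j\le v_j\le M_j$ with $m_j,M_j$ as in \eqref{op:D}, i.e. $\mathbf{v}\in D$. For the monotonicity constraints, linearity of $K$ gives $(K((t_0\pm1)\mathbf{v}))_j=(t_0\pm1)(K\mathbf{v})_j$; then $(t_0+1)(K\mathbf{v})_j\ge-1$ becomes $(K\mathbf{v})_j\ge -1/(t_0+1)$, and $(t_0-1)(K\mathbf{v})_j\ge-1$ reverses to $(K\mathbf{v})_j\le -1/(t_0-1)=1/(1-t_0)$. Together these are precisely the two-sided bound defining $\spaceK$ in \eqref{op:E}, i.e. $K\mathbf{v}\in\spaceK$. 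Combining the box and monotonicity parts shows the two feasible sets agree, which establishes the claimed identity.

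The only point requiring care, and the place where a sign slip would be fatal, is the division by $t_0-1<0$, which reverses every inequality in which it appears; this reversal is exactly what produces the $\max$ and $\min$ in $m_j,M_j$ and the upper endpoint $1/(1-t_0)$ of $\spaceK$. I would also flag the harmless index mismatch at $j=N$: the monotonicity family runs only to $N-1$, whereas $\spaceK$ constrains all components of $K\mathbf{v}$, but $(K\mathbf{v})_N=0$ always lies in $[-1/(t_0+1),\,1/(1-t_0)]$ because $|t_0|<1$ forces the two endpoints to straddle $0$, so enlarging the index range changes nothing.
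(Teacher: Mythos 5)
Your proof is correct, and it is exactly the elementary verification the authors intend: the paper states this proposition without any proof, so there is nothing to compare against beyond noting that splitting $V$ into box and monotonicity constraints and dividing by $t_0+1>0$ and $t_0-1<0$ is the only computation involved. You handle the two delicate points correctly — the inequality reversal under division by $t_0-1<0$, which produces the $\max$/$\min$ in $m_j,M_j$ and the bound $1/(1-t_0)$, and the harmless $j=N$ component of $K\mathbf{v}$, which is $0$ and lies in $[-1/(t_0+1),1/(1-t_0)]$ since $|t_0|<1$.
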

Having $D$ and $\spaceK$ both  depending  on $t_0$  is not an issue since  problem \eqref{eq:discretepb} is solved for fixed $t_0$.
Introducing $\mathbf{t}=\{t_i\}_{i=1}^n\in\R^n$,  problem \eqref{eq:discretepb} can be reformulated as:
\begin{equation}
\label{eq:pb0}\min_{\mathbf{v}\in\R^N}  \min_{\mathbf{t}\in\R^n}J(\mathbf{v},\mathbf{t}):=\underbrace{\sum_{i=1}^n  \| \mathbf{w}_i-(t_0+t_i)\mathbf{v} \|^{2}_{\bar{\bnu}}}_{F(\mathbf{v},\mathbf{t})}+\underbrace{\chi^{}_S(\mathbf{v})  +\chi^{}_{D}(\mathbf{\mathbf{v}})+\chi^{}_{\spaceK}(K\mathbf{v})+ \chi^{}_{B^n_1}(\mathbf{t})}_{G(\mathbf{v},\mathbf{t})}.
\end{equation}
where $B^n_1$ is the $L^\infty$ ball of $\R^n$ with radius $1$ dealing with the constraint $t_i\in[-1;1]$. Notice that $F$ is differentiable but non-convex  in $(\mathbf{v},\mathbf{t})$ and $G$ is non-smooth and convex.
\paragraph{Geodesic surface approach}
For fixed $(t_0^1,\ldots,t_0^K)\in\R^K$ and $\mathbf{\alpha_{}^{\pm}}=\{\alpha_{k}^{+},\alpha_{k}^{-}\}_{k=1}^K$, the discretized version of \eqref{eq:optimALL2} is then
\begin{align}
\min_{\mathbf{v_1},\ldots,\mathbf{v_K}\in\R^N}  \min_{\mathbf{\alpha_{1}^{\pm}},\ldots,\mathbf{\alpha_{n}^{\pm}}\in\R^{2K}} & F'(\mathbf{v},\mathbf{t})+  G'(\mathbf{v},\mathbf{t}) \label{eq:surf0},
\end{align}
where  $F'(\mathbf{v},\mathbf{t})=\sum_{i=1}^n  \| \mathbf{w}_i- \sum_{k=1}^K  (\alpha_{ik}^{+}( t_0^k+1)  +\alpha_{ik}^{-}(t_0^k-1))\mathbf{v_k}\Vert_{\bar{\bnu}}^2$ is still non-convex and differentiable,  $ G'(\mathbf{v},\mathbf{t}) =\sum_{k=1}^K \left(\chi^{}_{\spaceK}(K\mathbf{v_k})+\chi^{}_{D_k}(\mathbf{\mathbf{v_k}})\right)+ \sum_{i=1}^n \chi^{}_A(\mathbf{\alpha_{i}^{\pm}})^2$ is convex and non smooth, 
 $A$ is the simplex of $\R^{2K}$ and $D_k$ is defined as in \eqref{op:D}, depending on $t_0^k$.
 We recall that  the orthogonality between vectors $\mathbf{v_k}$ is not taken into account  in the geodesic surface approach.

\subsubsection{Optimization through the Forward-Backward Algorithm}
Following \cite{Attouch},  in order to compute a critical point of problem \eqref{eq:pb0}, one can consider the Forward-Backward algorithm (see also \cite{Ochs} for an acceleration using inertial terms). Denoting as $X=(\mathbf{v},\mathbf{t})\in\R^{N+n}$,  taking $\tau>0$ and $X^{(0)}\in\R^{N+n}$, it reads:
\begin{equation}\label{algoFB}X^{(\ell+1)}=\Prox_{\tau G}(X^{(\ell)}-\tau \nabla F(X^{(\ell)})),\end{equation}
where 
$\Prox_{\tau G}(\tilde X)=\argmin_{X} \frac1{2\tau}||X-\tilde X||^2+G(X)$ with the Euclidean norm $||\cdot ||$.
In order to guarantee the convergence of this algorithm,  the gradient of $F$ has to be Lipschitz continuous with parameter $M>0$  and the time step  should be taken as $\tau<1/M$.
The details of computation of $\nabla F$ and $\Prox_{\tau G}$ for the two algorithms are given in Appendix \ref{sec:annexe}.


\section{Statistical comparison between log-PCA and GPCA on  synthetic and real data} \label{sec:num}

\subsection{Synthetic example - Iterative versus geodesic surface approaches}

First, for the synthetic example displayed in Figure \ref{eucli_wass_gaussian}, we compare the two algorithms (iterative and geodesic surface approaches) described in Section \ref{sec:reformulation}. The results are reported in Figure \ref{gpca_gaussian_2_algo} by comparing the projection of the data onto the first and second geodesics computed with each approach. We also also display the projection of the data onto the two-dimensional surface generated by each method. It should be recalled that the principal surface for the iterative geodesic algorithm is not necessarily a geodesic surface but each $g_t(t_0^k,u_k)_{t\in[-1;1]}$ defined by \eqref{eq:geodesic_parameterization} for $k=1,2$ is a geodesic curve for $\U=\{u_1,u_2\}$. For data generated from a location-scale family of Gaussian distributions, it appears that each algorithm provides a satisfactory reconstruction of the data set. The main divergence concerns the first and second principal geodesic. Indeed enforcing the orthogonality between components in the iterative approach enables to clearly separate the modes of variation in location and scaling, whereas searching directly a geodesic surface in the second algorithm implies a mixing of these two types of variation.

Note that the barycenter of Gaussian distributions $\mathcal{N}(m_i,\sigma_i^2)$ can be shown to be Gaussian with mean $\sum m_i$ and variance $(\sum \sigma_i)^2$.
\begin{figure}[ht!]
\centering
\includegraphics[width= 0.7\textwidth,height=0.52\textwidth]{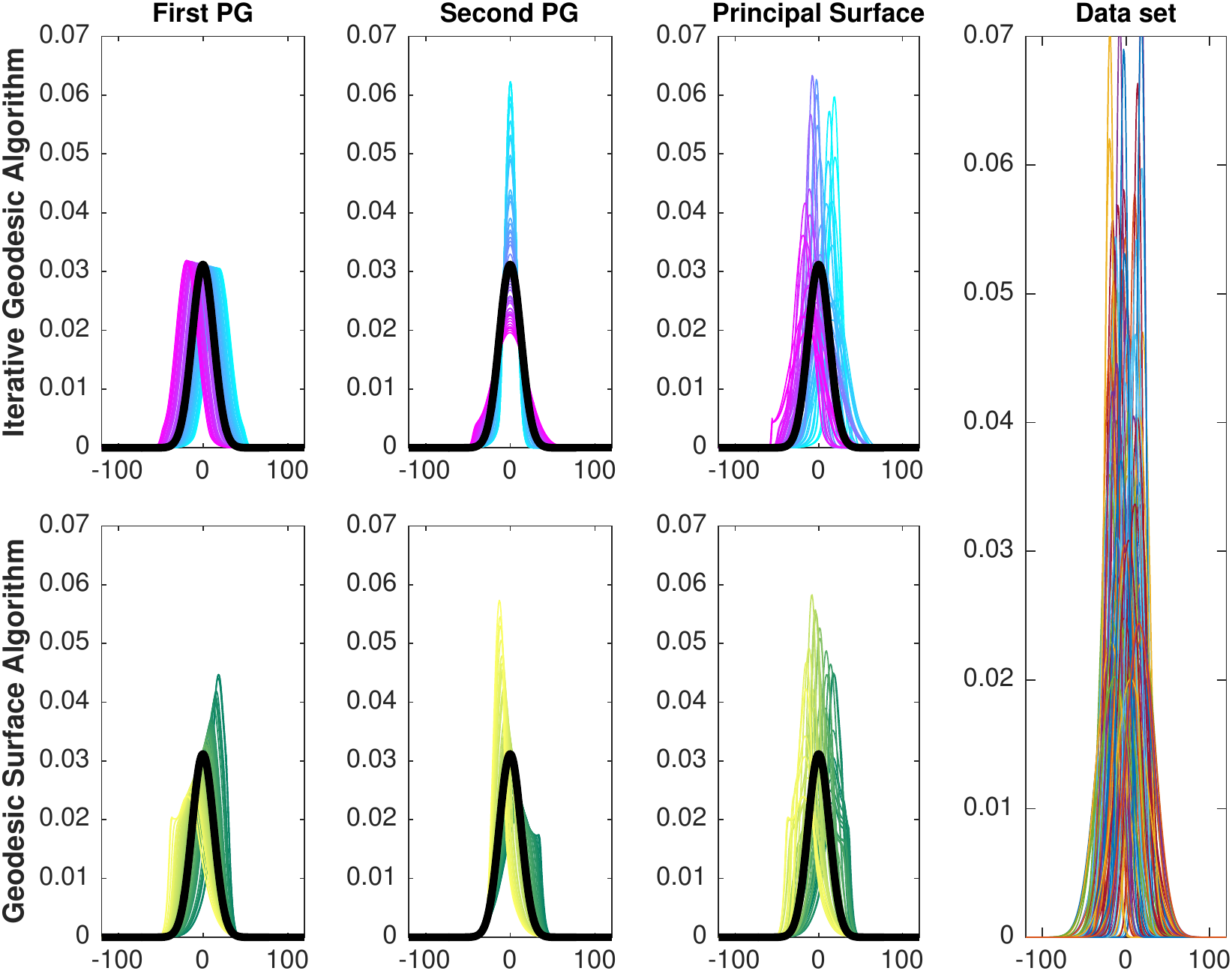} \vspace{-0.2cm}
\caption{Synthetic example - Data sampled from a location-scale family of Gaussian distributions. The first row is the GPCA of the data set obtained with the iterative geodesic approach. The second row is the GPCA through the geodesic surface approach. The black curve is the density of the Wasserstein barycenter. Colors encode the progression of the pdf of principal geodesic components in $\WS$.}
\label{gpca_gaussian_2_algo}
\end{figure}

\subsection{Population pyramids}
As a first real example, we consider a real dataset whose elements are histograms representing the population pyramids of $n=217$ countries for the year 2000 (this dataset is produced by the International Programs Center, US Census Bureau (IPC, 2000), available at \url{http://www.census.gov/ipc/www/idb/region.php}). Each histogram in the database represents the relative frequency by age, of people living in a given country. Each bin in a histogram is an interval of one year,  and the last interval corresponds to people older than 85 years. The histograms are normalized so that their area is equal to one, and thus they represent a set of pdf. In Figure \ref{fig:pop}, we display the population pyramids of 4 countries, and the whole dataset. Along the interval $\Omega = [0,84]$, the variability in this dataset can  be considered as being small.  
\begin{figure}[ht!]
\centering
\includegraphics[width= 0.55\textwidth,height=0.25\textwidth]{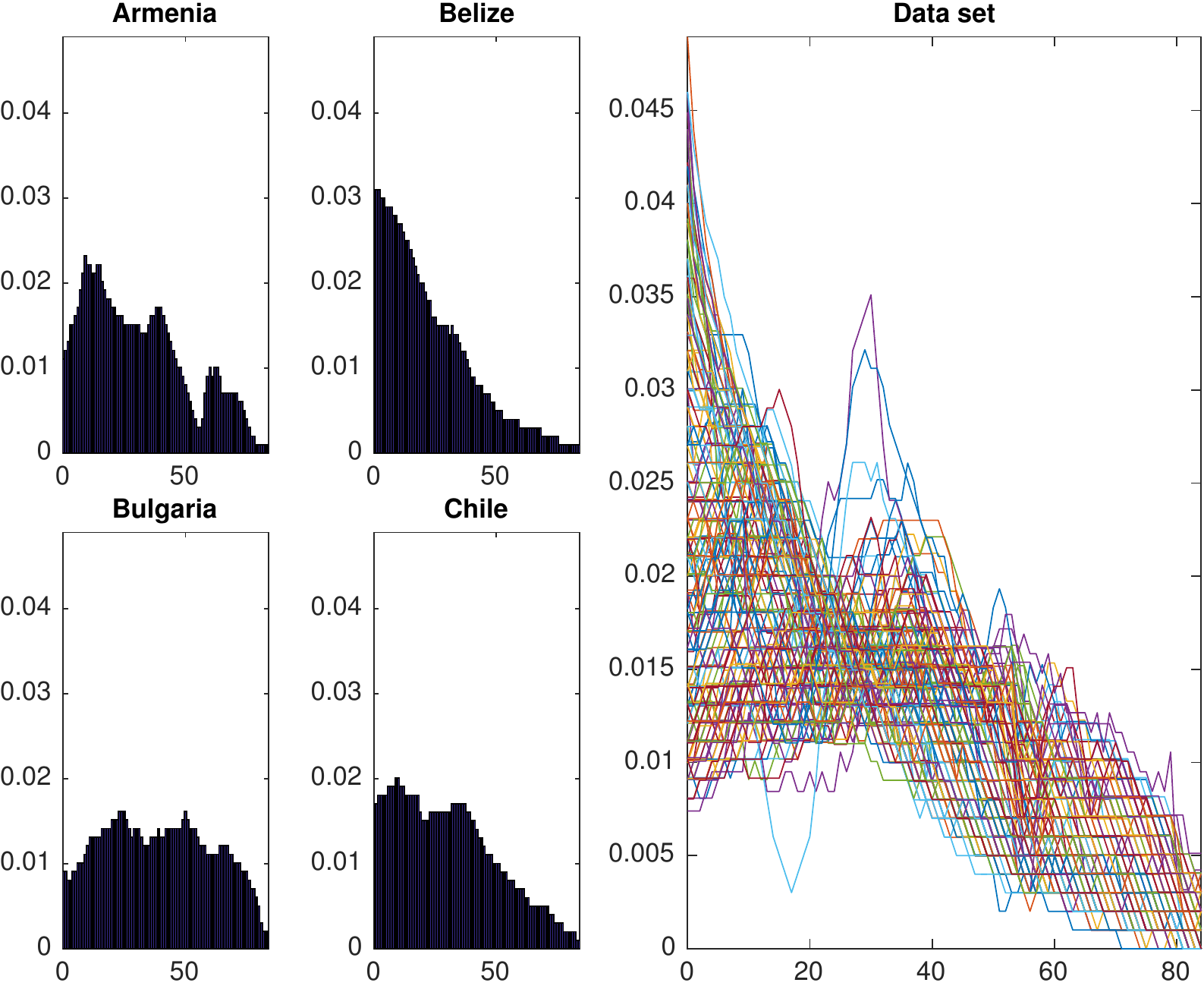} \vspace{-0.2cm}

\caption{\label{fig:pop}Population pyramids. A subset of population pyramids for 4 countries (left) for the year 2000, and the whole dataset of $n=217$ population pyramids (right) displayed as pdf over the interval $[0,84]$.\vspace{-0.2cm}} 
\end{figure}

For $K=2$, log-PCA and the iterative GPCA algorithm lead to the same principal orthogonal directions in $\LLemp$, namely that $\tilde{u}_1 = u^{\ast}_{1}$ and $\tilde{u}_2 = u^{\ast}_{2}$ where $(\tilde{u}_1,\tilde{u}_2)$ minimizes  \eqref{eq:linearPCA} and $(u^{\ast}_{1},u^{\ast}_{2})$ are minimizers of \eqref{eq:optimCPCA3}. In this case, all projections of data $\omega_i = \log_{\bar{\bnu}}(\nu_i)$ for $i=1,\ldots,n$ onto $\spann (\{\tilde{u}_1,\tilde{u}_2  \})$ lie in  $V_{\bar{\bnu}}(\Omega)$, which means that  log-PCA and the iterative geodesic algorithm lead exactly the same principal geodesics. Therefore, population pyramids is an example of data that are sufficiently concentrated around their Wasserstein barycenter so that log-PCA and GPCA are equivalent approaches (see Remark 3.5 in \cite{BGKL13} for further details). 
Hence, we only display in Figure \ref{gpca_pays} the results of the iterative and geodesic surface algorithms.

In the iterative case, the projection onto the first geodesic exhibits the difference between less developed countries (where the population is mostly young) and more developed countries (with an older population structure). The second geodesic captures more subtle divergences concentrated on the middle age population. It can be observed that the geodesic surface algorithm gives different results since the orthogonality constraint on the two principal geodesics is not required. In particular, the principal surface mainly exhibit differences between countries with 
a young population with countries having an older population structure, but the difference between its first and second principal geodesic is less contrasted. 

\begin{figure}[ht!]
\centering
\includegraphics[width= 0.7\textwidth,height=0.52\textwidth]{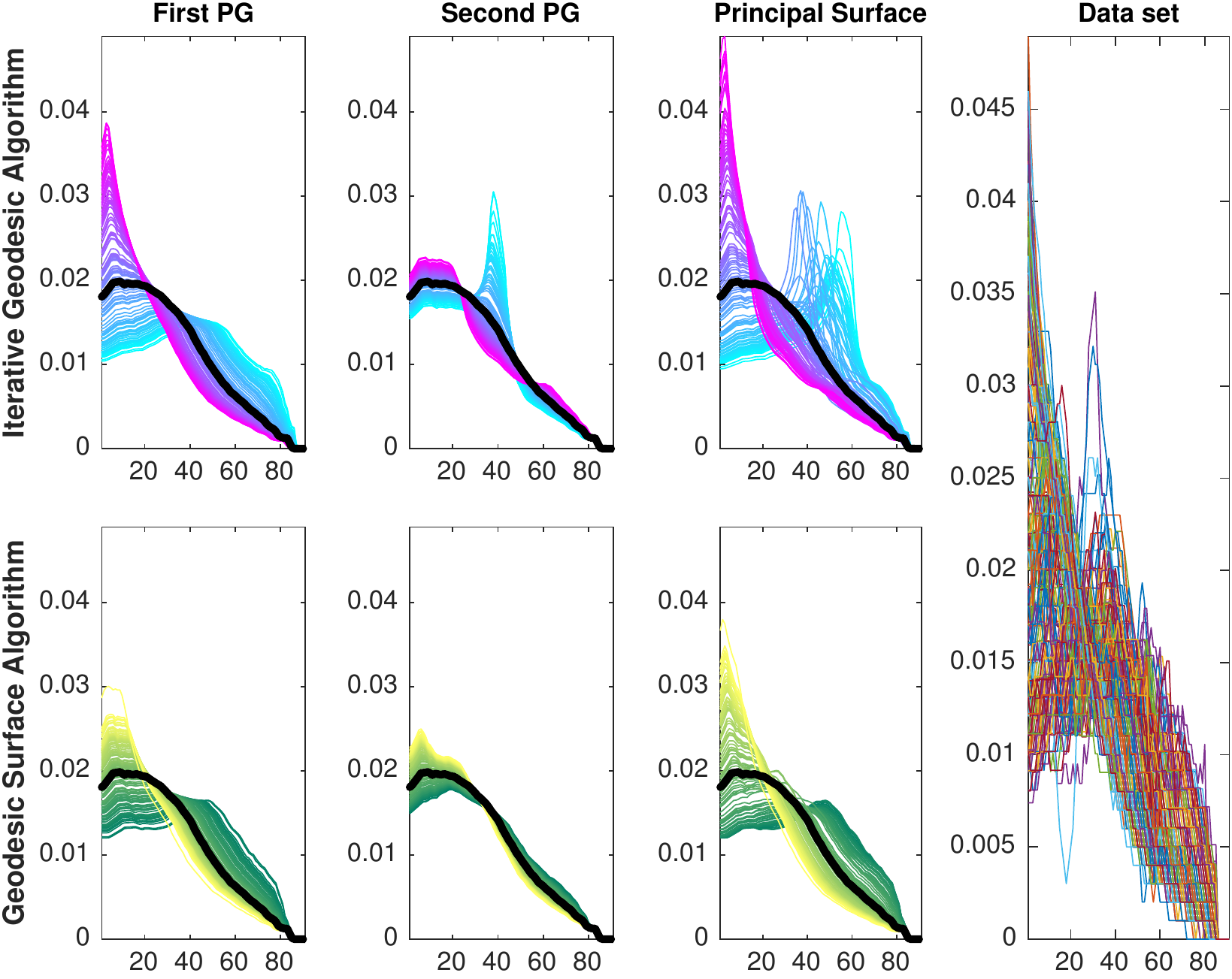} \vspace{-0.2cm}
\caption{\label{gpca_pays}Population pyramids. The first row is the GPCA of the data set obtained with the iterative geodesic approach. The second row is the GPCA through the geodesic surface approach. The first (resp. second) column is the projection of the data into the first (resp. second) principal direction. The black curve is the density of the Wasserstein barycenter. Colors encode the progression of the pdf of principal geodesic components in $\WS$.\vspace{-0.2cm}}
\end{figure}

\subsection{Children's first name at birth}
In a second example, we consider a dataset of  histograms which represent, for a list of $n=1060$ first names, the distribution of children born with that name per year in France between years 1900 and 2013. In Figure \ref{fig:names}, we display the histograms of four different names, as well as the whole  dataset. Along the interval $\Omega = [1900,2013]$, the variability in this dataset is much larger than the one observed for  population pyramids.  This dataset has been provided by the INSEE (French Institute of Statistics and Economic Studies).

\begin{figure}[h!]
\centering

\includegraphics[width= 0.6\textwidth,height=0.25\textwidth]{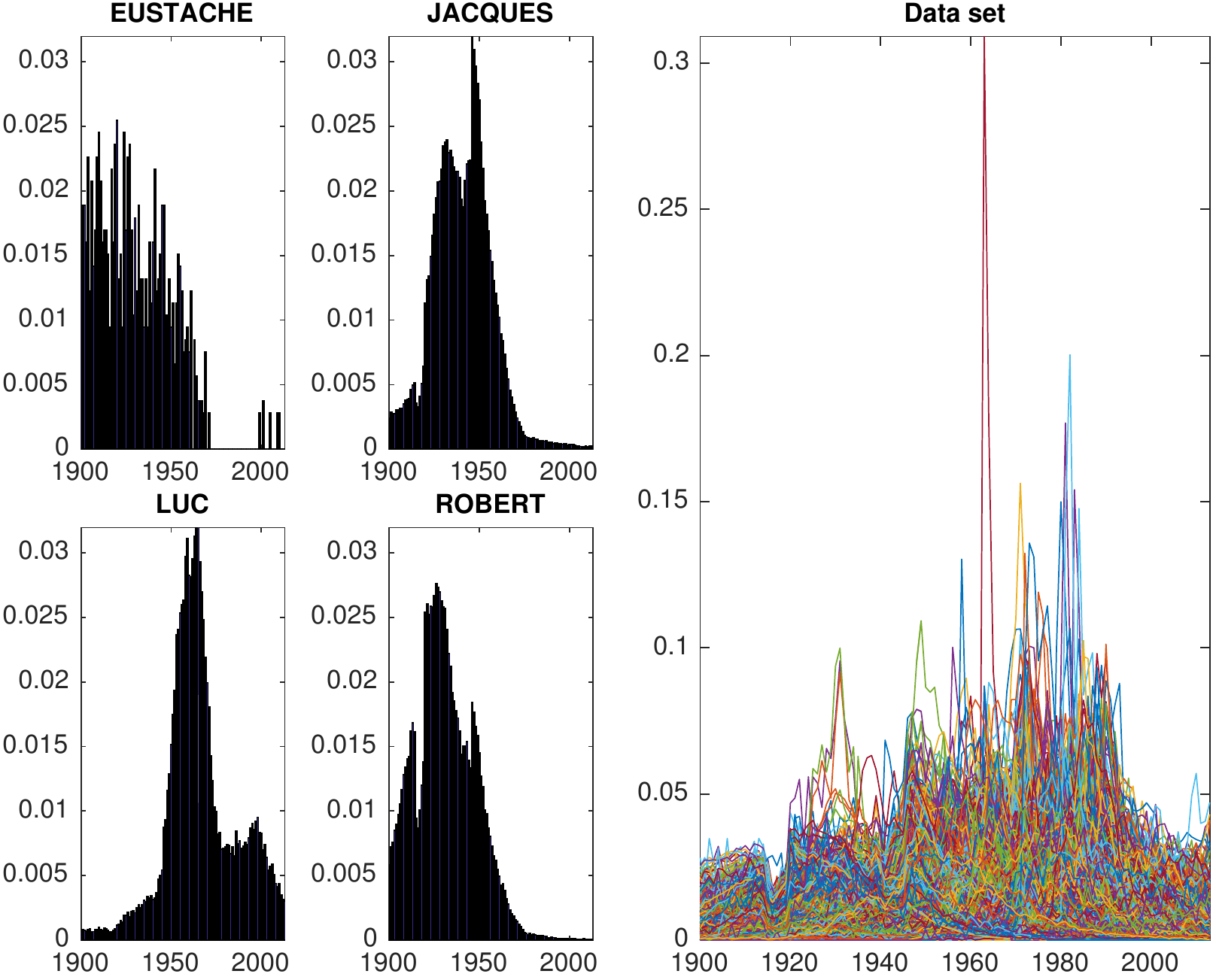} 

\caption{Children's first name at birth. An subet of 4 histograms representing the distribution of children born with that name per year in France, and the whole dataset of $n=1060$ histograms (right), displayed as pdf over the interval $[1900,2013]$} \label{fig:names}
\end{figure}

This is an  example of real data where log-PCA and GPCA are not equivalent procedures for $K=2$ principal components. We recall that log-PCA leads to the computation of  principal orthogonal directions  $\tilde{u}_1,\tilde{u}_2$ in $\LLemp$ minimizing  \eqref{eq:linearPCA}. First observe that in the left column of Figure \ref{maps_name},  for some data $\omega_i = \log_{\bar{\bnu}}(\nu_i)$, the mappings $\tilde{T}_{i} = \text{id} + \Pi_{\spann (\{\tilde{u}_1\})} \omega_{i}$ are decreasing, and their range is larger than the interval $\Omega$ (that is, for some $x \in \Omega$, one has that $\tilde{T}_{i}(x) \notin \Omega$). Hence, such $\tilde{T}_{i}$ are not optimal mappings. Therefore, the condition  $\Pi_{\spann (\tilde{U})} \omega_{i} \in V_{ \bar{\bnu}}(\Omega)$ for all $1 \leq i \leq n$ (with $\tilde{U} = \{\tilde{u}_1,\tilde{u}_2\})$ is not satisfied, implying that log-PCA does not lead to a solution of GPCA thanks to Proposition 3.5 in \cite{BGKL13}.
 \begin{figure}[ht!]
\centering
\includegraphics[scale=0.62]{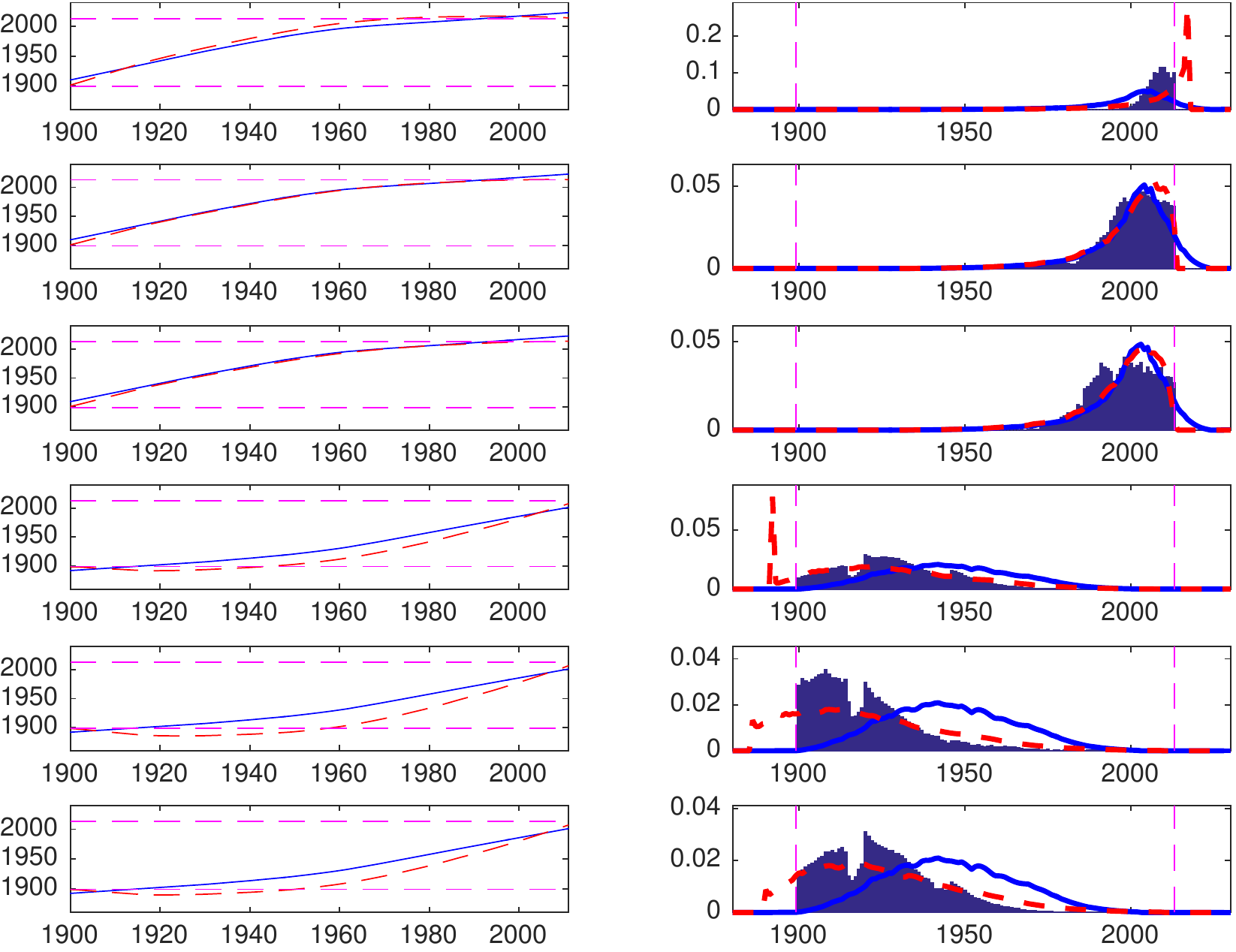} 
\caption{Children's first name at birth with support $\Omega = [1900,2013]$. (Left) The dashed red curves represent the mapping $\tilde{T}_{i} = \text{id} + \Pi_{\spann (\{\tilde{u}_1\})} \omega_{i}$  where $\omega_i = \log_{\bar{\bnu}}(\nu_i)$, and $\tilde{u}_1$ is the first principal direction in $\LLemp$ obtained via log-PCA. The blue curves are the mapping $T_{i} = \text{id} + \Pi_{\spann (\{u^{\ast}_{1}\})} \omega_{i}$, where  $u^{\ast}_{1}$ is the first principal direction in $\LLemp$ obtained via the iterative algorithm. (Right) The histogram stands for the pdf of measures $\nu_i$ that have a large Wasserstein distance with respect to the barycenter $\bar{\bnu}$. The red curves are the pdf of the projection   $\exp_{\bar{\bnu}}(\Pi_{\spann (\tilde{u}_1)} \omega_{i})$ with log-PCA, while the blue curves are the pdf of the projection $\exp_{\bar{\bnu}}(\Pi_{\spann (u^{\ast}_1)} \omega_{i})$ with GPCA.
}
\label{maps_name}
\end{figure}

Hence, for log-PCA, the corresponding histograms displayed in the right column of Figure \ref{maps_name} are  such that $\Pi_{\spann (\{\tilde{u}_1\})} \omega_{i} \notin V_{ \bar{\bnu}}(\Omega)$. This implies that the densities of the projected measures $  \exp_{\bar{\bnu}}(\Pi_{\spann (\tilde{u}_1)} \omega_{i})$ have a support outside $\Omega = [1900,2013]$. Hence, the estimation of  the measure $\nu_i =\exp_{\bar{\bnu}}(  \omega_{i}) $  by its projection onto the first mode of variation obtained with log-PCA is not satisfactory. 

In Figure \ref{maps_name}, we also display the results given by the iterative geodesic algorithm, leading to orthogonal directions  $u^{\ast}_{1},u^{\ast}_{2}$ in $\LLemp$ that are minimizers of \eqref{eq:optimCPCA3}. Contrary to the results obtained with log-PCA, one observes in Figure \ref{maps_name} that all the mapping $T_{i} = \text{id} + \Pi_{\spann (\{u^{\ast}_{1}\})} \omega_{i}$ are non-decreasing, and such that $T_{i}(x) \in \Omega$ for all $x \in \Omega$.
 Nevertheless, by enforcing these two conditions, one has that a good estimation of  the measure $\nu_i =\exp_{\bar{\bnu}}(  \omega_{i}) $ by its projection  $\exp_{\bar{\bnu}}(\Pi_{\spann (u^{\ast}_1)} \omega_{i})$ is made difficult as most of the mass of  $\nu_i$  is located at either the right or left side of the interval $\Omega$ which is not the case for its projection. 
The histograms displayed in  the right clumn  of Figure \ref{maps_name} correspond to the elements in the dataset that have a large Wasserstein distance with respect to the barycenter $\bar{\bnu}$. This explains why it is difficult to have good projected measures with GPCA. For elements in the dataset that are closest to $\bar{\bnu}$, the projected measures $  \exp_{\bar{\bnu}}(\Pi_{\spann (\tilde{u}_1)} \omega_{i})$ and $\exp_{\bar{\bnu}}(\Pi_{\spann (u^{\ast}_1)} \omega_{i})$ are much closer to $\nu_i$ and  for such elements, log-PCA and the iterative geodesic algorithm lead to similar results in terms of data projection.

To better estimate the extremal data in Figure \ref{maps_name}, a solution is to increase the support of the data to the interval $\Omega_{0} = [1850,2050]$, and to perform log-PCA and GPCA in the Wasserstein space $W_2(\Omega_{0})$. The results are reported in  Figure \ref{maps_name_increased}. In that case, it can be observed that both algorithms lead to similar results, and that a better projection is obtained for the extremal data. Notice that with this extended support, all the mappings $\tilde{T}_{i} = \text{id} + \Pi_{\spann (\{\tilde{u}_1\})} \omega_{i}$ obtained with log-PCA are optimal in the sense that they are non-decreasing with a range inside $\Omega_{0}$.

Finally, we display in  Figure \ref{gpca_name} and  Figure \ref{gpca_name_increased}  the results of the iterative and geodesic surface algorithms with either $\Omega = [1900,2013]$ or with data supported on the extended support $\Omega_{0} = [1850,2050]$. The projection of the data onto the first principal geodesic suggests that the distribution of a name is deeply dependent on the part of the century. The second geodesic express a popular trend through a spike effect. In Figure \ref{gpca_name}, the artefacts in the principal surface that are obtained with the iterative algorithm  at the end of the century, correspond to the fact that the projection of the data $\omega_i$ onto the surface spanned by the first two components is not ensured to belong to the set $V_{\bar{\bnu}}(\Omega)$.

\begin{figure}[ht!]
\centering
\includegraphics[scale=0.62]{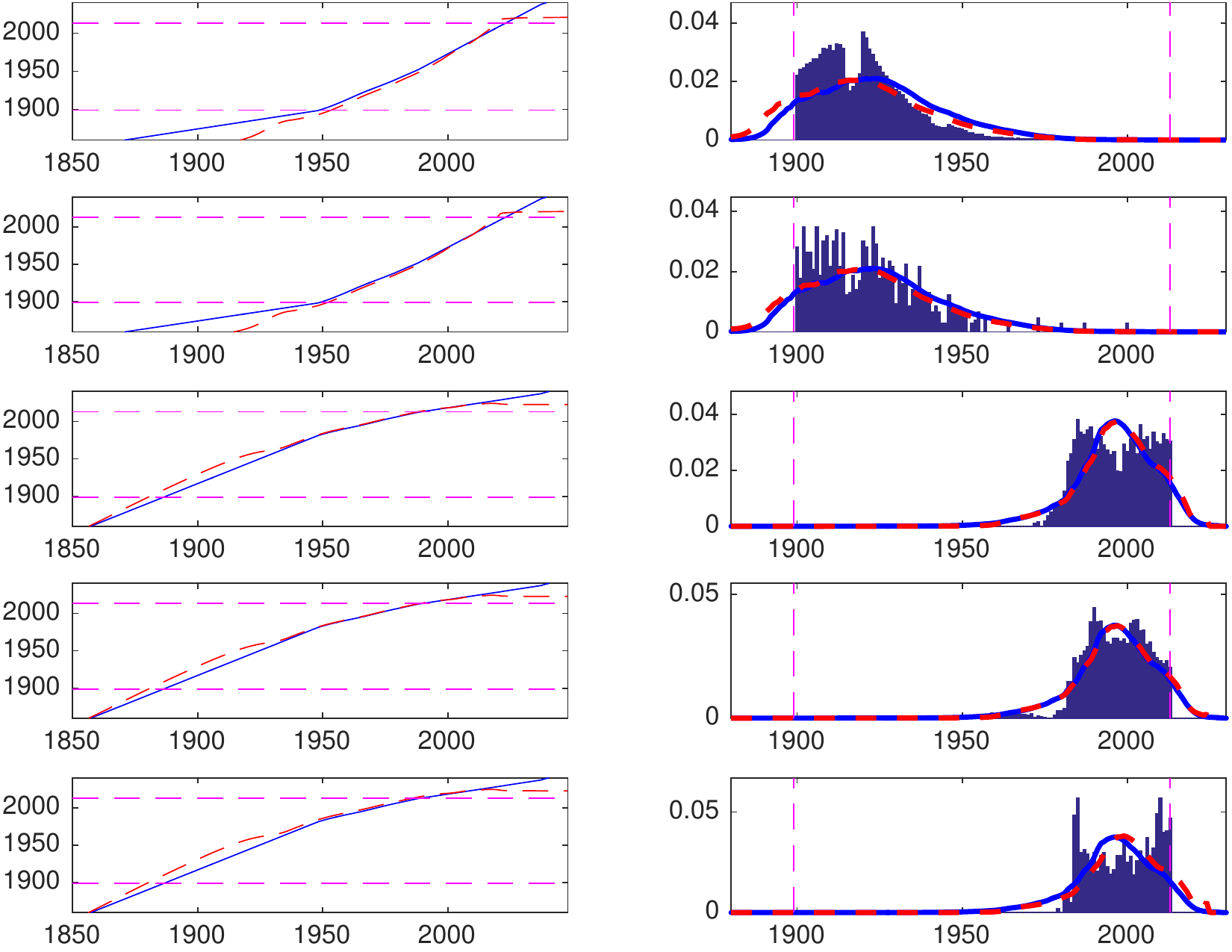} 
\caption{Children's first name at birth with extended support $\Omega_{0} = [1850,2050]$. (Left) The dashed red curves represent the mapping $\tilde{T}_{i} = \text{id} + \Pi_{\spann (\{\tilde{u}_1\})} \omega_{i}$  where $\omega_i = \log_{\bar{\bnu}}(\nu_i)$, and $\tilde{u}_1$ is the first principal direction in $\LLemp$ obtained via log-PCA. The blue curves are the mapping $T_{i} = \text{id} + \Pi_{\spann (\{u^{\ast}_{1}\})} \omega_{i}$, where  $u^{\ast}_{1}$ is the first principal direction in $\LLemp$ obtained via the iterative algorithm. (Right) The histogram stands for the pdf of measures $\nu_i$ that have a large Wasserstein distance with respect to the barycenter $\bar{\bnu}$. The red curves are the pdf of the projection   $\exp_{\bar{\bnu}}(\Pi_{\spann (\tilde{u}_1)} \omega_{i})$ with log-PCA, while the blue curves are the pdf of the projection $\exp_{\bar{\bnu}}(\Pi_{\spann (u^{\ast}_1)} \omega_{i})$ with GPCA.}
\label{maps_name_increased}
\end{figure}

\begin{figure}[ht!]
\centering
\includegraphics[width= 0.7\textwidth,height=0.46\textwidth]{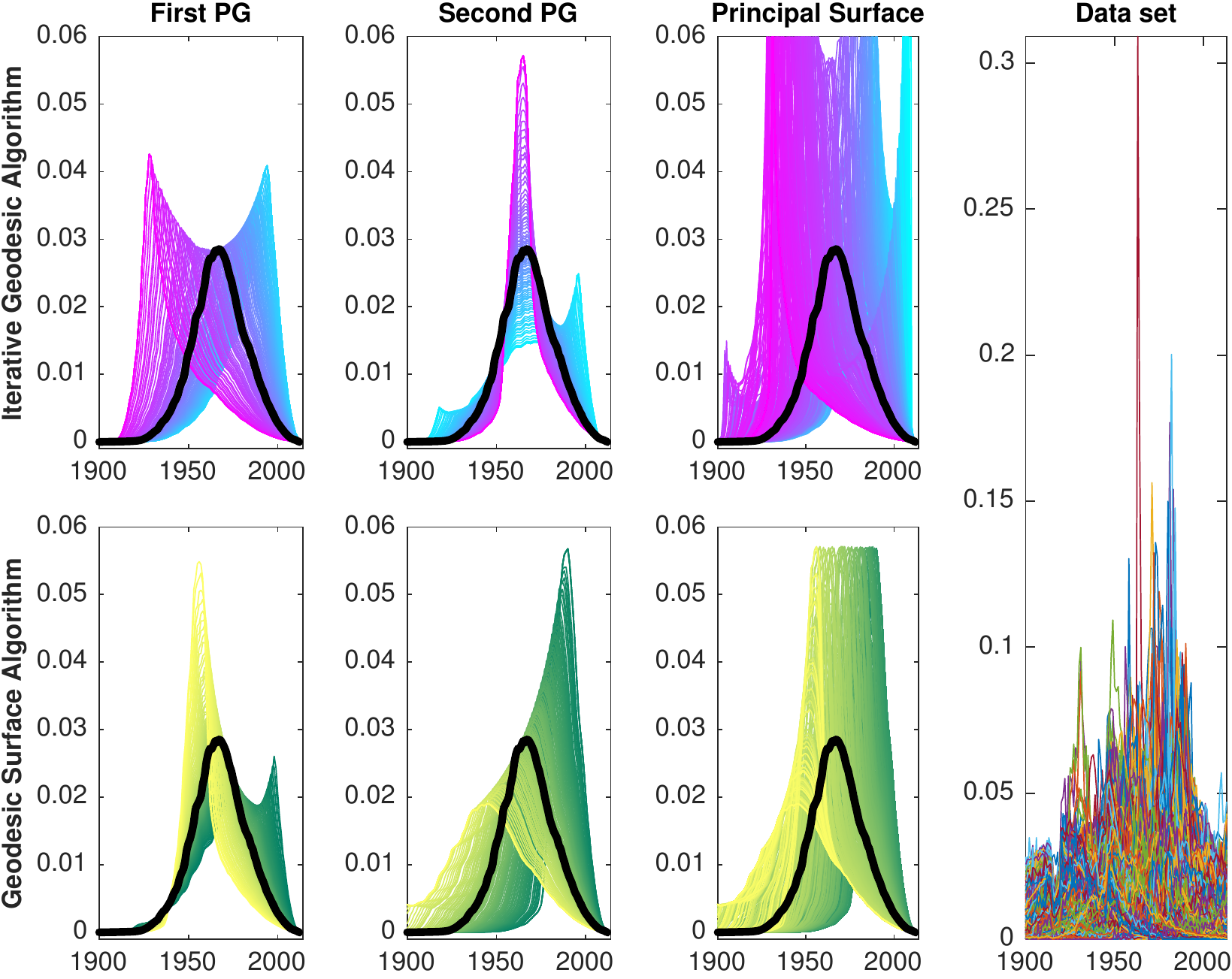} 
\caption{Children's first name at birth with support $\Omega = [1900,2013]$. The first row is the GPCA of the data set obtained with the iterative geodesic approach. The second row is the GPCA through the geodesic surface approach. The first (resp. second) column is the projection of the data into the first (resp. second) principal direction. The black curve is the density of the Wasserstein barycenter. Colors encode the progression of the pdf of principal geodesic components in $\WS$.}
\label{gpca_name}
\end{figure}

\begin{figure}[ht!]
\centering
\includegraphics[width= 0.7\textwidth,height=0.46\textwidth]{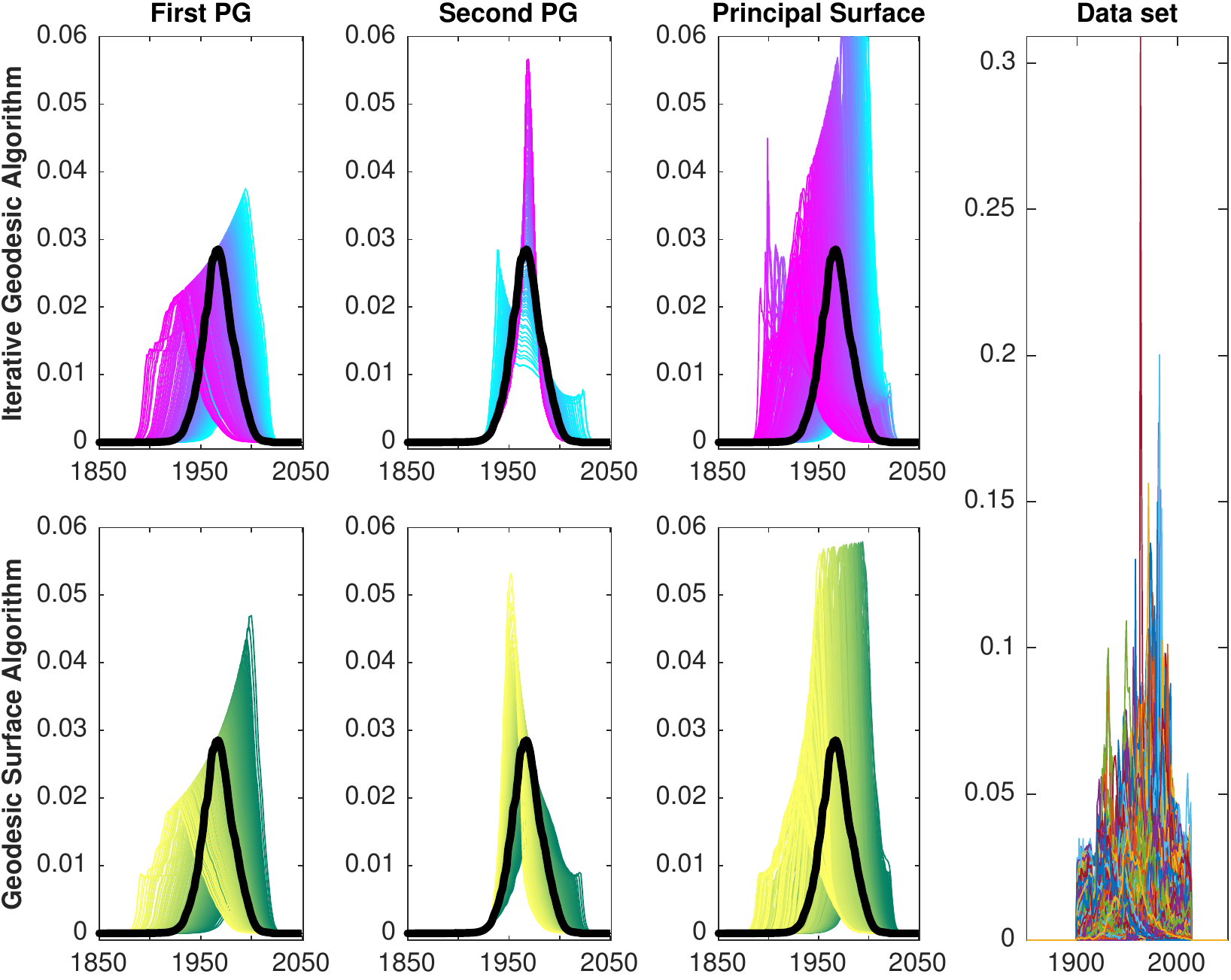} 
\caption{Children's first name at birth with extended support $\Omega_{0} = [1850,2050]$. The first row is the GPCA of the data set obtained with the iterative geodesic approach. The second row is the GPCA through the geodesic surface approach. The first (resp. second) column is the projection of the data into the first (resp. second) principal direction. The black curve is the density of the Wasserstein barycenter. Colors encode the progression of the pdf of principal geodesic components in $\WS$.}
\label{gpca_name_increased}
\end{figure}

\section{Extensions beyond $d > 1$ and some perspectives.} \label{sec:end}

We now briefly show that our iterative algorithm for finding principal geodesics can be adapted to the general case $d > 1$. This requires to take into account two differences with the one-dimensional case. First, the definition of the space $\VV$ in \ref{def:explog} relies on the explicit close-form formulae \eqref{eq:exp} for the computation of a solution to the optimal transport problem which is specific to the one-dimensional case. We must hence provide a more general definition of $\VV$. Second, the isometry property $(P1)$ does not hold for $d > 1$, so that Wasserstein distances cannot be replaced by the $L^2_{\bar{\bnu}}$ norm between log maps as in \eqref{eq:iterative_geodesic_isometry} and must be explicitly computed and differentiated.

\paragraph{Definition of $\VV$ in the general case} In the one dimensional case, $\VV$ is characterized in Proposition \ref{prop:exp} (P3) as the set of functions $v \in \LL$ such that $T:={\rm id}+v$ is $\mu_{r}$-almost everywhere non decreasing. A striking result by Brenier~\cite{brenier1991polar} is that, in any dimension, if $\mu_r$ does not give mass to small set, there exists an optimal mapping $T \in \LL$ between $\mu_r$ and any probability measure $\nu$, and $T$ is equal to the gradient of a convex function $u$ ie. $T = \nabla u$. Hence, we define the set $\VV$ as the set of functions $v \in \LL$ such that $id + v = \nabla u$ for an arbitrary convex function $u$.

In order to deal with the latter constraint, we note that this space of functions is equal to the space of functions $v \in \LL$ such that $div(v)\geq -1$. Indeed, assuming that $x+v=\nabla u$, then $u$ being a convex potential involves $div(\nabla u)\geq 0$ which is equivalent to $div(x+v)=div(v)+1\geq 0$.

\paragraph{General objective function} Without the isometry property (P1), the objective function $H(t_0,v)$ in \eqref{eq:optimCPCA3_1D} must be written with the explicit Wasserstein distance $d_W$,
\begin{equation}
H(t_{0},v) = \frac{1}{n} \sum_{i=1}^{n}  \min_{t_i\in[-1;1]} d_W^2(\nu_i, g_{t_i}(t_0, v))+\chi^{}_{V_{\bar{\bnu}}(\Omega)}((t_0-1)v)+\chi^{}_{V_{\bar{\bnu}}(\Omega)}((t_0+1)v),
\end{equation}
where $g_t(t_0, v) = (\text{id} + (t_0 + t)v)\# \bar{\bnu} \mbox{ for } t \in [-1;1]$ as defined in \eqref{eq:geodesic_parameterization}. 
Optimizing over both the functions $\mathbf{v} \in (\R^d)^N $ and the projection times $\mathbf{t}$, the discretized objective function to minimize is,\vspace{-0.1cm}
\begin{equation}
\label{eq:pb0_2D}\min_{\mathbf{v}\in\R^N}  \min_{\mathbf{t}\in\R^n}J(\mathbf{v},\mathbf{t}):=\underbrace{\sum_{i=1}^n d_W^2(\nu_i, g_{t_i}(t_0, \mathbf{v}))}_{F(\mathbf{v},\mathbf{t})}+\underbrace{\chi^{}_{S}(v)+\chi^{}_{E}(K\mathbf{v}) + \chi^{}_{D}(\mathbf{v}) + \chi^{}_{B^n_1}(\mathbf{t})}_{G(\mathbf{v},\mathbf{t})}.\vspace{-0.2cm}
\end{equation}
where $K$ is a discretized divergence operator, and $E = \{ \mathbf{z}\in \R^N~:~  \frac{-1}{t_0+1} \leqslant \mathbf{z} \leqslant \frac{1}{1-t_0} \}$, $D = \{ \mathbf{v}~:~ \id +(t_0 \pm 1)\mathbf{v} \in \Omega \}$ deals with the domain constraint  and $S$ deals with the orthogonality constraint w.r.t. to the preceding principal components.
As for the one-dimensional case, we minimize $J$ through the Forward-Backward algorithmas detailed in the appendix \ref{sec:prox_2D}.

Extension to higher dimensions is straightforward. However, considering that we have to discretize the support of the Wasserstein mean $\bar{\bnu}$, the approach becomes intractable for $d > 3$.

\subsection{Application to grayscale images}

We consider the MNIST dataset~\cite{lecun1998mnist} which contains grayscale images of handwritten digits. All the images have identical size $28 \times 28$ pixels. Each grayscale image, once normalized so that the sum of pixel grayscape values sum to one, can be interpreted as a discrete probability measure, which is supported on the 2D grid of size $28 \times 28$. The ground metric for the Wasserstein distance is then the 2D square Euclidean distance between the locations of the pixels of the two-dimensional grid. We compute the first principal components on 1000 images of each digit. Wasserstein barycenters, which are required as input to our algorithm, are approximated efficiently through iterative bregman projections as proposed in~\cite{benamou2015iterative}. We use the network simplex algorithm\footnote{\url{http://liris.cnrs.fr/~nbonneel/FastTransport/}} to compute Wasserstein distances. 

Figure \ref{fig:MNIST_GPCA}  displays the results obtained with our Forward-Backward algorithm (with $t_0$ set to $0$ for simplicity), and the ones  given by Log-PCA as  described in section \ref{sec:linear_pca}. These two figures are obtained by sampling the first principal components. We then use kernel smoothing to display the discrete probability measures back to the original grid and present the resulting grayscale image with an appropriate colormap. 

\begin{figure}[ht!]
\begin{center}
\begin{tabular}{cc}
\includegraphics[scale=0.3]{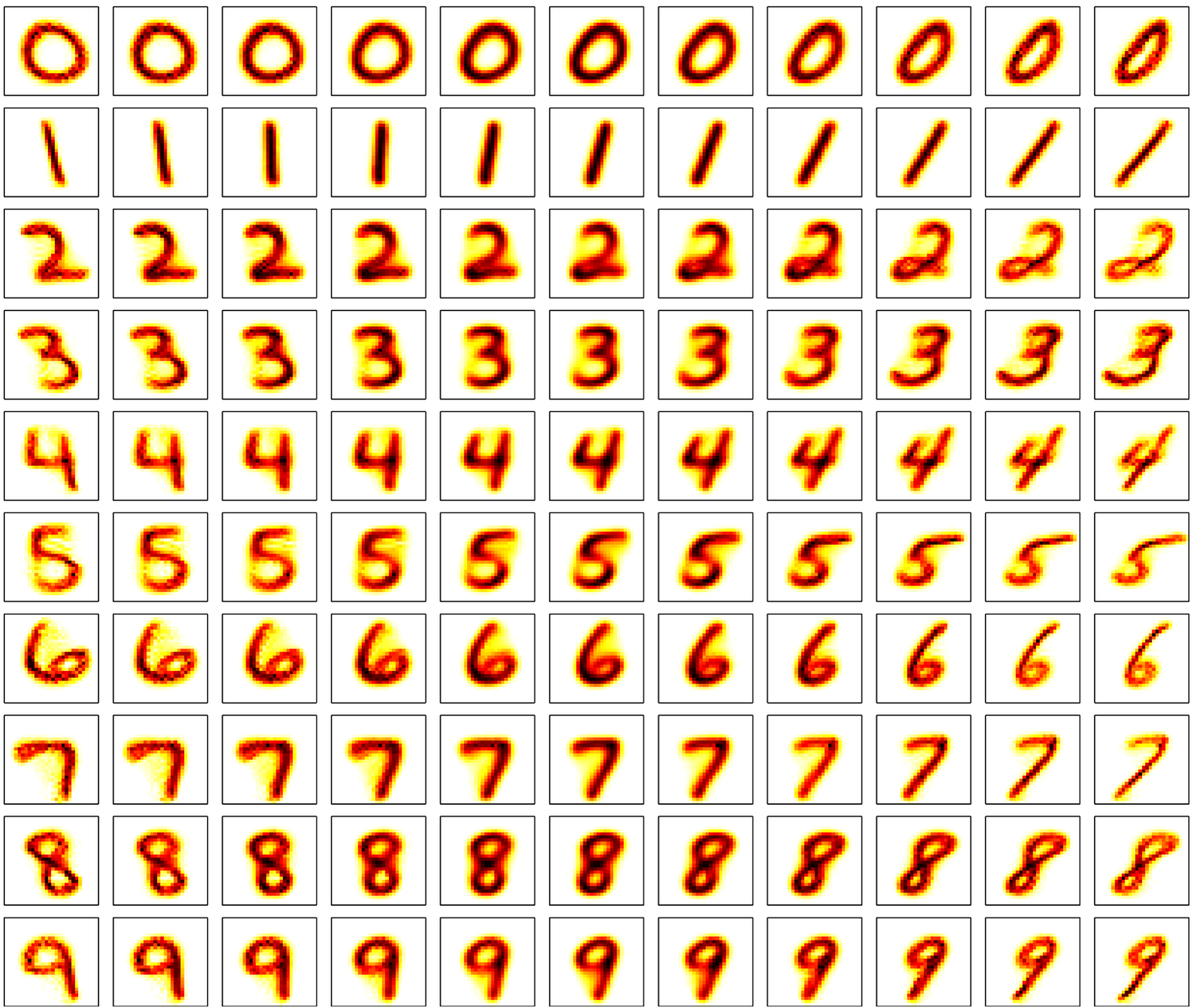}&\includegraphics[scale=0.3]{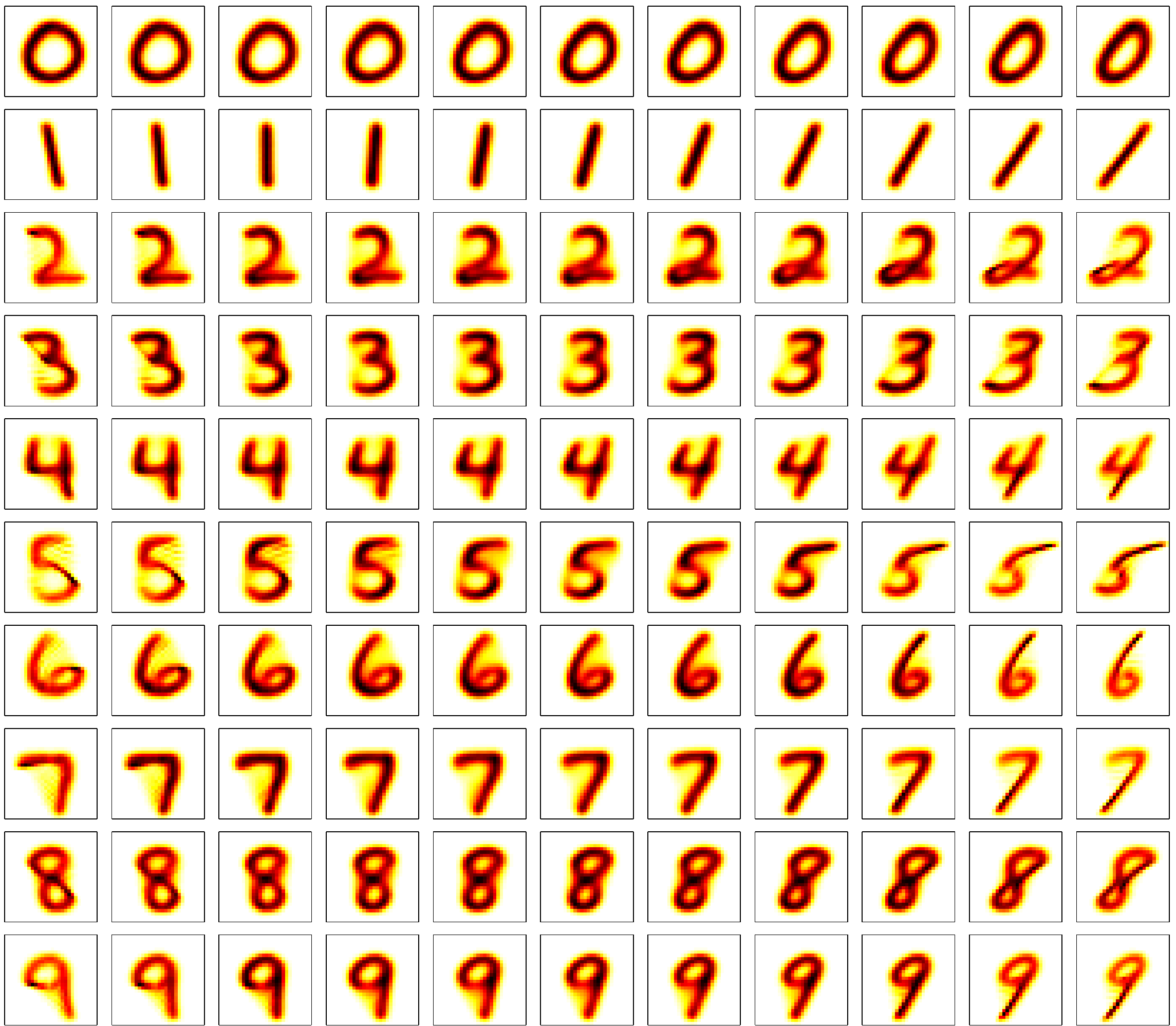} \\
GPCA&log-PCA
\end{tabular}
\caption{First principal geodesics for 1000 images of each digit from the MNIST dataset, computed through the proposed Forward-Backward algorithm (left) and log-PCA (right).}
\label{fig:MNIST_GPCA}
\end{center}
\end{figure}
Visually, both the Log-PCA and GPCA approaches capture well the main source of variability of each set of grayscape images ie each number. We observe variations in the slant of the handwritten digits for all digits, the most obvious case being digit '1'. As a principal component is parameterized by a whole velocity field on the support of the Wasserstein mean of the data, single principal components can capture more interesting patterns, such as changes in the shape of the '0' or the presence or absence of the lower loop of the '2'. From purely visual inspection, it is difficult to tell which approach, Log-PCA or GPCA, provides a ``better'' principal component. For this purpose we compute the reconstruction error of each digit. This reconstruction error is computed in the same way for both Log-PCA and GPCA principal components: We sample the principal components at many times $t$ and find for each image in a given dataset, the time at which the geodesic is the closest to the image sample. This provides an approximation of 
$	\min_{t\in[-1;1]} d_W^2(\nu_i, g_{t}(v))$ for each image $i=1,\ldots,n$, 
where $(g_{t})_{t \in [-1,1]}$ is the principal component. For the Log-PCA principal component, we take $\tilde{g}_{t} = (\text{id} +t 1.25 \lambda v)\#\bar{\bnu}$, where $\lambda$ is the eigenvalue corresponding to the first principal component. The $1.25$ factor is useful to consider a principal curve which goes through the whole range of the dataset. For the GPCA principal geodesic, we have $g^{\ast}_{t} = (\text{id} +tv)\#\bar{\bnu}$. The reconstruction errors are shown in Table \ref{table:RE}. We see that, for each digit, we obtain a better, i.e. smaller, reconstruction error when using the proposed Forward-Backward algorithm. This result is not surprising, since the reconstruction error is explicitly minimized through the Forward-Backward algorithm. As previously mentioned, Log-PCA rather computes linearized Wasserstein distances. In one-dimension, the isometry property (P1) states that these quantities are equal. In dimension two or larger, that property does not hold.
%
%

\begin{table}[h]
\begin{center}        
{\small                         
\begin{tabular}{|c|c|c|} \hline
\text{MNIST digit} & \text{Log-PCA RE}~ ($\cdot 10^3$)& \text{GPCA RE}~ ($\cdot 10^3$)\\ \hline
0 &  $2.0355$ & $\mathbf{1.9414}$ \\
1 &  $3.1426$ & $\mathbf{1.0289}$ \\
2 &  $3.4221$ & $\mathbf{3.3575}$ \\
3 &  $2.6528$ & $\mathbf{2.5869}$ \\
4 &  $2.8792$ & $\mathbf{2.8204}$ \\
5 &  $2.9391$ & $\mathbf{2.9076}$ \\
6 &  $2.1311$ & $\mathbf{1.9864}$ \\
7 &  $4.7471$ & $\mathbf{2.8205}$ \\
8 &  $2.0741$ & $\mathbf{2.0222}$ \\
9 &  $1.9303$ & $\mathbf{1.8728}$ \\
\hline
\end{tabular}}
\caption{\label{table:RE}Reconstruction Errors (RE) computed on 1000 sample images of each digit of the MNIST dataset. (center) Reconstruction error w.r.t. the first principal component computed with the Log-PCA algorithm. (right) Reconstruction error w.r.t. the first principal geodesic computed with the proposed Forward-Backward algorithm.\vspace{-0.6cm}}   
\end{center}                                
\end{table}

\subsection{Discussion}

The proposed Forward-Backward algorithm minimizes the same objective function as defined in~\cite{NIPS2015_5680}. The first difference with the algorithm provided~\cite{NIPS2015_5680} is that we take gradient steps with respect to both $\mathbf{v}$ and $\mathbf{t}$, while the latter first attempts to find the optimal $t$ (by sampling the geodesics at many time $t$), before taking a gradient step of $\mathbf{v}$. Our approach reduces the cost of computing a gradient step by one order of magnitude. Secondly,~\cite{NIPS2015_5680} relied on barycentric projections to preserve the geodesicity of the principal curves in between gradient steps. That heuristic does not guarantee a decrease in the objective after a gradient step. Morever, the method in~\cite{NIPS2015_5680} considered two velocity fields $\mathbf{v}_1,\mathbf{v}_2$ rather than a single $\mathbf{v}$ since  the optimality of both $\mathbf{v}$ and $-\mathbf{v}$ could not be preserved through the barycentric projection.

When considering probability measures over high dimensional space ($d > 3$), our algorithm becomes intractable since we need to discretize the support of the Wasserstein mean of the data with a regular grid, while the approach of~\cite{NIPS2015_5680} is still tractable since an arbitrary support for the Wasserstein mean is used. A remaining challenge for computing principal geodesics in the Wasserstein space is then to propose an algorithm for GPCA which is still tractable in higher dimensions while not relying on barycentric projections.


\appendix
\section{Dimension $d=1$} \label{sec:annexe}
We here detail the application of Algorithm \eqref{algoFB} to the iterative GPCA procedure that consists in solving the problem \eqref{eq:pb0}:
 \begin{equation*}\min_{\mathbf{v}\in\R^N}  \min_{\mathbf{t}\in\R^n}J(\mathbf{v},\mathbf{t}):=\underbrace{\sum_{i=1}^n  \| \mathbf{w}_i-(t_0+t_i)\mathbf{v} \|^{2}_{\bar{\bnu}}}_{F(\mathbf{v},\mathbf{t})}+\underbrace{\chi^{}_S(\mathbf{v})  +\chi^{}_{D}(\mathbf{\mathbf{v}})+\chi^{}_{\spaceK}(K\mathbf{v})+ \chi^{}_{B^n_1}(\mathbf{t})}_{G(\mathbf{v},\mathbf{t})}.
\end{equation*}

\subsection{Lispschitz constant of $\nabla F$} \label{sec:LipschitzCst}
Let us now look at the Lipschitz constant  of $\nabla F(\mathbf{v},\mathbf{t})$ on the  restricted acceptable set $ D\times  B^n_1$.
We first denote as $\mathcal{H}$ the hessian matrix (of size $(N+n)\times(N+n)$) of the $\mathcal{C}^2$ function $F(X)$. We know that if  the spectral radius of $\mathcal{H}$  is bounded by a scalar value $M$, i.e. $\rho(\mathcal{H})\leq M$, then $\nabla F$ is a Lipschitz continuous function with constant $M$.
Hence, we  look at the eigenvalues of the Hessian matrix of  $F=\sum_{i=1}^n \sum_{j=1}^N\bar{\bfun}_{n}(x_j) (w_i^j-(t_0+t_i) v_j)^2$ that is
\begin{equation*}
\frac{\partial^2 F}{\partial t_i^2}=\sum_{j=1}^N 2 v_j^2\bar{\bfun}_{n}(x_j),\quad
\frac{\partial^2 F}{\partial v_j^2}=\sum_{i=1}^n 2 (t_0+t_i)^2\bar{\bfun}_{n}(x_j),\quad
\frac{\partial^2 F}{\partial t_i\partial v_j}=2\bar{\bfun}_{n}(x_j)(2(t_0+t_i)v_j-w_i^j)\\
\end{equation*}
and $\frac{\partial^2 F}{\partial t_i\partial t_{i'}}=\frac{\partial^2 F}{\partial v_j\partial v_{j'}}=0$, for all $i\neq i'$ or $j\neq j'$.
%
Being $\{\mu_k\}_{k=1}^{n+N}$ the eigenvalues of $\mathcal{H}$, we have $\rho(\mathcal{H})=\max_k |\mu_k|\leq \max_k \sum_l |\mathcal{H}_{kl}|$.
We denote as  $f_\infty=\max_j|\bar{\bfun}_{n}(x_j)|$ and likewise $w_\infty=\max_{i,j}|w_i^j|$. Since $|t_0|<1$,  $t_i^2\leq1$, $\forall \mathbf{t}\in B^n_1$ and $v_j^2\leq\alpha^2=(b-a)^2$, $\forall \mathbf{v}\in D$, by defining $\gamma=2(1+|t_0|)\alpha+w_\infty$, we  thus have 
 \begin{equation}\label{lip_const}\rho(\mathcal{H})\leq 2f_\infty \max\left\{ n\alpha^2+N\gamma, n\gamma+N(1+|t_0|)^2\right\}:=M.\end{equation}
 \subsection{Computing $\textrm{Prox}_{\tau G}$}\label{sec:proxG} In order to implement the algorithm \eqref{algoFB}, we finally need to compute the proximity operator of $G$ defined as:
$$(\mathbf{v}^*,\mathbf{t}^*)=\Prox_{\tau G}(\tilde{\mathbf{v}}, \tilde {\mathbf{t}})=\uargmin{\mathbf{v},\mathbf{t}} \frac1{2\tau}(||\mathbf{v}-\tilde{ \mathbf{v}}||^2+||\mathbf{t}-\tilde {\mathbf{t}}||^2)+\chi^{}_S(\mathbf{v})+\chi^{}_{D}(\mathbf{v})+\chi^{}_{\spaceK}(K\mathbf{v}) + \chi^{}_{B^n_1}(\mathbf{t}).$$
This problem can be solved independently  on $\mathbf{v}$ and $\mathbf{t}$.
For $\mathbf{t}$, it can be done pointwise as 
$t^*_i=\argmin_{t_i} \frac1{2\tau}||t_i-\tilde t_i||^2+\chi^{}_{B^1_1}(t_i)=\Proj_{[-1;1]}(\tilde t_i).$
Unfortunately, there is no closed form expression of the proximity operator for the component $\mathbf{v}$. It requires to solve the following  intern optimization problem at each extern iteration $(\ell)$ of the algorithm \eqref{algoFB}:
 \begin{equation}\label{intern_pb}\mathbf{v}^*=\uargmin{\mathbf{v}} \frac1{2\tau}||\mathbf{v}-\tilde{\mathbf{v}}||^2+\chi^{}_S(\mathbf{v})+\chi^{}_{D}(\mathbf{v})+\chi^{}_{\spaceK}(K\mathbf{v}),\end{equation}
where, to avoid confusions, we  denote by $\mathbf{v}$ the variable that is optimized within the intern optimization problem \eqref{intern_pb}.
\begin{rem}
The Lipschitz constant of $\nabla F(\mathbf{v},\mathbf{t})$ in \eqref{lip_const} relies independantly on $\mathbf{v}$ and $|t_0|$, thus we can choose the optimal gradient descent step $\tau$ for $\mathbf{v}^*$ and $\mathbf{t}^*$.
\end{rem}
\paragraph{Primal-Dual reformulation} Using duality (through Fenchel transform), one has:
 \begin{align}
&\min_{\mathbf{v}\in\R^N}\frac1{2\tau}||\mathbf{v}-\tilde {\mathbf{v}}||^2+\chi^{}_S(\mathbf{v})+\chi^{}_{D}(\mathbf{v})+\chi^{}_{\spaceK}(K\mathbf{v})\nonumber\\
=&\min_{\mathbf{v}\in\R^N}\max_{\mathbf{z}\in\R^N}\frac1{2\tau}||\mathbf{v}-\tilde {\mathbf{v}}||^2+\chi^{}_S(\mathbf{v})+\chi^{}_{D}(\mathbf{v})+\langle K\mathbf{v},\mathbf{z}\rangle -\chi^*_{\spaceK}(\mathbf{z}),\label{eq:pd}
\end{align}
where $\mathbf{z}=\{\mathrm{z}_j\}_{j=1}^N\in \R^N$ is a dual variable and $\chi^*_{\spaceK}=\sup_\mathbf{v} \langle \mathbf{v},\mathbf{z}\rangle -\chi^{}_{\spaceK}(\mathbf{v})$ is the convex conjugate of $\chi^{}_{\spaceK}$ that reads:
 \begin{equation*}
(\chi^*_{\spaceK}(\mathbf{z}))_j=\left\{\begin{array}{ll} 
-\mathrm{z}_j/(1+t_0)
&\textrm{if } \mathrm{z}_j\leq 0,\\
\mathrm{z}_j/(1-t_0)&\textrm{if } \mathrm{z}_j> 0.\end{array}\right.
\end{equation*}
Hence, one can use the Primal-Dual algorithm proposed in \cite{CP14} to solve the problem \eqref{eq:pd}. For two parameters $\sigma, \theta>0$ such that $||K||^2\leq \frac1\sigma (\frac1 \theta-\frac 1\tau) $ and given $\mathbf{v}^0,\bar{ \mathbf{v}}^0,\mathbf{z}^0\in \R^N$,  the algorithm is:
 \begin{equation}\label{PD-prox}\left\{\begin{array}{lll}
\mathbf{z}^{(m+1)}&=&\Prox_{\sigma \chi^*_{\spaceK}}(\mathbf{z}^{(m)}+\sigma K\bar {\mathbf{v}}^{(m)})\\
\mathbf{v}^{(m+1)}&=&\Prox_{\theta (\chi_{D}+\chi^{}_S)}(\mathbf{v}^{(m)}-\theta (K^* \mathbf{z}^{(m+1)} +\frac1\tau (\mathbf{v}^{(m)}-\tilde {\mathbf{v}}) )\\
\bar {\mathbf{v}}^{(m+1)}&=&2\mathbf{v}^{(m+1)}-\mathbf{v}^{(m)}\\
\end{array}\right.\end{equation}
where $K^*$ is defined as $\langle K\mathbf{v},\mathbf{z}\rangle= \langle \mathbf{v},K^*\mathbf{z}\rangle$. Using the operator $K$ defined in \eqref{op:K}, we thus have:
\begin{equation}\label{op:Kt}(K^*\mathbf{z})_j =\left\{\begin{array}{ll}-\mathrm{z}_{1}/\Delta_1&\textrm{if }j=1\\  \mathrm{z}_{j-1}/\Delta_{j-1}-\mathrm{z}_{j}/\Delta_j&\textrm{if } 1<j<N.\\
\mathrm{z}_{N-1}/\Delta_{N-1}&\textrm{if } j=N,
\end{array}\right.\end{equation} 
where $\Delta_j=x_{j+1}-x_j$.  We have that $||K||^2=\rho(K^*K)$,  the largest eigenvalue of $K^*K$. With the discrete operators \eqref{op:K} and \eqref{op:Kt}, $\rho(K^*K)$  can be bounded by 
 \begin{equation}\label{norme_K}
 \delta^2=2\max_j (1/\Delta_j^2 + 1/\Delta_{j+1}^2).\end{equation}
One can therefore for instance take  $\sigma=\frac1\delta$ and $\theta=\tau/(1+\delta\tau)$.
\paragraph{Proximity operators in \eqref{PD-prox}} 
The proximity operator of $\chi_{D}+\chi_S$ is obtained as:
 \begin{equation}\label{prox:v}(\Prox_{\theta (\chi_{D}+\chi^{}_S)}(\mathbf{v}))_j=(\Proj_{D\cap S}(\mathbf{v}))_j=\Proj_{[m_j;M_j]}\left( \left(\mathbf{v}-\sum_{l=1}^{k-1}\frac{\langle \mathbf{u}^{}_l,\mathbf{v}\rangle_{\bar{\bnu}}}{||\mathbf{u}^{}_l||_{\bar{\bnu}}^2} \mathbf{u}^{}_l  \right)_j\,\right),\end{equation}
since projecting onto $D\cap S$ is equivalent to first project onto the orthogonal of $\spann({\U}^{k-1})$ and then onto $D$.
One can finally show that the proximity operator of $\chi^*_{\spaceK}$ 
can be computed pointwise as:
 \begin{equation}\label{prox:z}(\Prox_{\sigma \chi^*_{\spaceK}}(\mathbf{z}))_j=
 \left\{\begin{array}{ll}
\mathrm{z}_j-\sigma/(1-t_0)&\textrm{if }  \mathrm{z}_j>\sigma/(1-t_0)\\
\mathrm{z}_j+\sigma/(1+t_0)&\textrm{if }  \mathrm{z}_j<-\sigma/(1+t_0)\\
0&\textrm{otherwise}.
\end{array}\right.
\end{equation}

\subsection{Algorithms for GPCA}
\label{sec:GPCA_algorithm}
Gathering all the previous elements, we can finally find a critical point of the non-convex problem \eqref{eq:pb0} using the 
Forward-Backward (FB) framework \eqref{algoFB}, as detailed in Algorithm \ref{algo}.

\begin{algorithm}[H]
\caption{\label{algo} Resolution with FB of problem \eqref{eq:pb0}: $\min_{\mathbf{v},\mathbf{t}} F(\mathbf{v},\mathbf{t})+G(\mathbf{v},\mathbf{t})$}
\begin{algorithmic}
 \REQUIRE $\boldsymbol{w}_i\in \R^N$ for $i=1\cdots n$,  $\mathbf{u}^{}_1,\cdots \mathbf{u}^{}_{k-1}$,  $t_0\in]-1;1[$, $\alpha=(b-a)>0$, 
 $\eta>0$, $\delta>0$ (defined in \eqref{norme_K})  and $M>0$ (defined in \eqref{lip_const}).
\STATE Set $(\mathbf{v}^{(0)},\mathbf{t}^{(0)})\in D\times B^n_1$
\STATE Set $\tau<1/M$, $\sigma=1/\delta$ and $\theta=\tau/(1+\delta\tau)$.
\STATE \%Extern loop:
\WHILE {$||\iter{\mathbf{v}}-\ter{\mathbf{v}}||/||\ter{\mathbf{v}}||>\eta$} 
\STATE \% FB on $\mathbf{t}$ with $\mathbf{t}^{(\ell+1)}=Prox_{\tau G}(\mathbf{t}^{(\ell)}-\tau \nabla F(\mathbf{v}^{(\ell)},\mathbf{t}^{(\ell)}))$: 
\STATE  $\iiter{t_i}=\Proj_{[-1;1]}\left(\iter{t_i}-\tau\sum_{j=1}^{N} \iter{v_j} \bar{\bfun}_{n}(x_j)\left((t_0+\iter{t_i})\iter{v_j}-w^j_i\right)\right)$
\STATE \% Gradient descent on $\mathbf{v}$ with $\tilde{\mathbf{v}}=\mathbf{v}^{(\ell)}-\tau \nabla F(\mathbf{v}^{(\ell)},\mathbf{t}^{(\ell)})$:
\STATE  $\tilde v_j=\iter{v_j}-\tau\bar{\bfun}_{n}(x_j)\sum_{i=1}^{n} (t_0+\iter{t_i}) \left((t_0+\iter{t_i}) \ \iter{v_j}-w^j_i\right)$\vspace{0.1cm}
\STATE \%Intern loop for $\mathbf{v}^{(\ell+1)}=Prox_{\tau G}(\tilde{\mathbf{v}})$:
\STATE Set $\mathbf{z}^{(0)}\in \spaceK$, $\mathbf{v}^{(0)}=\tilde {\mathbf{v}}$, $\bar {\mathbf{v}}^{(0)}=\tilde {\mathbf{v}}$
\WHILE{$||\mathbf{v}^{(m)}-\mathbf{v}^{(m-1)}||/||\mathbf{v}^{(m-1)}||>\eta$}
\STATE $\mathbf{z}^{(m+1)}=\Prox_{\sigma \chi^*_{\spaceK}}\left(\mathbf{z}^{(m)}+\sigma K\bar {\mathbf{v}}^{(m)}\right)$\hfill(using \eqref{prox:z})
\STATE $\mathbf{v}^{(m+1)}=\Prox_{\theta (\chi_{D}+\chi^{}_S)}\left(\mathbf{v}^{(m)}-\theta (K^* \mathbf{z}^{(m+1)} +\frac1\tau (\mathbf{v}^{(m)}-\tilde {\mathbf{v}}) \right)$\hfill(using \eqref{prox:v})
\STATE $\bar {\mathbf{v}}^{(m+1)}=2\mathbf{v}^{(m+1)}-\mathbf{v}^{(m)}$
\STATE $m:=m+1$
\ENDWHILE 
\STATE $\iiter{\mathbf{v}}=\mathbf{v}^{(m)}$
\STATE $\ell:=\ell+1$
\ENDWHILE
\RETURN $\mathbf{u}^{}_k=\iter{\mathbf{v}}$
\end{algorithmic}
\end{algorithm}

\paragraph{Geodesic surface approach}
In order to solve the problem \eqref{eq:surf0}, 
%
we follow the same steps as in the section \ref{sec:LipschitzCst}-\ref{sec:proxG}. 
First we obtain the Lipchitz constant of the function $\tilde F$ by the same tricks as in the iterative algorithm. 
Then, since the constraints' problem in $G'$ are separable, we can compute each component $\mathbf{v_k}$ and each $\mathbf{\alpha_{i}^{\pm}}$ independantly. The only difference with the iterative algorithm concerns the proximal operator of the function $\chi_A$, which is the projection into the simplex of $\R^{2K}$.

\section{Dimension $d = 2$}\label{sec:proxG_2D} 
\label{sec:prox_2D}
We now show how to generalize the algorithm to the two-dimensional case. 
\paragraph{Gradients of F.} We write $X = (x_1, \cdots, x_N) \in (\mathbb{R}^2)^N$ the discretized support of $\bar{\bnu}$, $Z_t = (x_1 +(t_0+t)v_1, \cdots, x_N +(t_0+t)v_N)$ the support $g_{t}(t_0, \mathbf{v})$, the geodesic sampled at time $t$. Let $P^*$ be an optimal transport plan between $\bar{\bnu}$ and $g_{t}(t_0,\mathbf{v})$. The function $F(\mathbf{v},\mathbf{t})$ is differentiable almost everywhere. Gradients can be computed in the same fashion as \cite{NIPS2015_5680} to obtain,
\begin{equation}
	\label{eq:gradients_general_d}
	\nabla _\mathbf{v} F = 2 \sum_{i=1}^n (t_0+t_i) (Z_{t_i} - XP^{*T}\text{diag}(1/\bar{\bfun}_{n})), ~~~~	\nabla _{t_i} F = 2\dotprod{Z_{t_i}\text{diag}(\bar{\bfun}_{n})}{\mathbf{v}} -2\dotprod{P^*}{\mathbf{v}^TX},
\end{equation}
\paragraph{Proximal operator of G.} 
The only difference between the one-dimensional case and the two-dimensional case considered here concerns the projection step of $\mathbf{v}$,
\begin{equation}\label{intern_pb_2D}
	 \mathbf{v}^*=\uargmin{\mathbf{v}} \frac1{2\tau}||\mathbf{v}-\tilde{\mathbf{v}}||^2+\chi^{}_S(\mathbf{v})+\chi^{}_{D}(\mathbf{v})+\chi^{}_{\spaceK}(K\mathbf{v}),
 \end{equation}

%

%

\paragraph{Primal-Dual reformulation} As for the on-dimensional case, one has,
 \begin{align}
&\min_{\mathbf{v}\in\R^N}\frac1{2\tau}||\mathbf{v}-\tilde {\mathbf{v}}||^2+\chi^{}_S(\mathbf{v})+\chi^{}_{D}(\mathbf{v})+\chi^{}_{\spaceK}(K\mathbf{v})\nonumber\\
=&\min_{\mathbf{v}\in\R^N}\max_{\mathbf{z}\in\R^N}\frac1{2\tau}||\mathbf{v}-\tilde {\mathbf{v}}||^2+\chi^{}_S(\mathbf{v})+\chi^{}_{D}(\mathbf{v})+\langle K\mathbf{v},\mathbf{z}\rangle -\chi^*_{\spaceK}(\mathbf{z}),\label{eq:pd_2D}
\end{align}
where $\mathbf{z}=\{\mathrm{z}_j\}_{j=1}^N\in \R^N$ is a dual variable and $\chi^*_{\spaceK}=\sup_\mathbf{v} \langle \mathbf{v},\mathbf{z}\rangle -\chi^{}_{\spaceK}(\mathbf{v})$ is the convex conjugate of $\chi^{}_{\spaceK}$. This can be solve with the same iterative steps as described in \ref{sec:proxG},
 \begin{equation}\label{PD-prox_2D}\left\{\begin{array}{lll}
\mathbf{z}^{(m+1)}&=&\Prox_{\sigma \chi^*_{\spaceK}}(\mathbf{z}^{(m)}+\sigma K\bar {\mathbf{v}}^{(m)})\\
\mathbf{v}^{(m+1)}&=&\Prox_{\theta (\chi_{D}+\chi^{}_S)}(\mathbf{v}^{(m)}-\theta (K^* \mathbf{z}^{(m+1)} +\frac1\tau (\mathbf{v}^{(m)}-\tilde {\mathbf{v}}) )\\
\bar {\mathbf{v}}^{(m+1)}&=&2\mathbf{v}^{(m+1)}-\mathbf{v}^{(m)}\\
\end{array}\right.\end{equation}
Here the definition of the divergence operator $K$ and the transpose of the divergence operator $K^*$ are specific to the dimension. For $d = 2$, 
with a regular grid discretizing $\Omega$ in $M\times N$ points, we take
$$K^T \mathbf{z} =-\nabla \mathbf{z} =-\begin{bmatrix}\partial_x^+\mathbf{z}\\\partial_y^+\mathbf{z}\end{bmatrix},$$
with $$\partial_x^+\mathbf{z}(i,j)=\left\{\begin{array}{ll}\mathbf{z}(i+1,j)-\mathbf{z}(i,j)&if\, i<M\\
0&otherwise,\end{array}\right.
$$
$$\partial_y^+\mathbf{z}(i,j)=\left\{\begin{array}{ll}\mathbf{z}(i,j+1)-\mathbf{z}(i,j)&if\, j<N\\
0&otherwise\end{array}\right.
$$
so that $$K \mathbf{u} = K\begin{bmatrix}\mathbf{u}_x\\\mathbf{u}_y\end{bmatrix}=\partial_x^-\mathbf{u}_x+\partial_y^-\mathbf{u}_y,$$
with
$$\partial_x^-\mathbf{u}(i,j)=\left\{\begin{array}{ll}\mathbf{u}(i,j)-\mathbf{u}(i-1,j)&if\, 1<i<M\\
\mathbf{u}(i,j)&if\,i=1\\
-\mathbf{u}(i-1,j)&if\,i=M.\end{array}\right.
$$
To ensure convergence of \ref{PD-prox_2D}, one can take $ 1/\sigma . (1/\theta-1/\tau)= ||K||^2$. See \cite{CP15,LP15} for more details.
Since we have $||K||^2=8$, the parameters can be taken as $\sigma=1/4$ and $\theta=\tau/(1+2\tau)$.

\bibliographystyle{alpha}
\bibliography{HistPCAW2}

\end{document}